\newtheorem{theorem}{Theorem}[section]
\newtheorem{lemma}[theorem]{Lemma}
\newtheorem{proposition}[theorem]{Proposition}
\newtheorem{example}{Example}
\newenvironment{proof}{\begin{IEEEproof}}{\end{IEEEproof}}
\newenvironment{definition}[1][Definition]{\begin{trivlist}
\item[\hskip \labelsep {\bfseries #1}]}{\end{trivlist}}
\newenvironment{remark}[1][Remark]{\begin{trivlist}
\item[\hskip \labelsep {\bfseries #1}]}{\end{trivlist}}
\newcommand{\Rmnum}[1]{\expandafter\@slowromancap\romannumeral #1@}
\begin{document}
%
\title{Quasi-Perfect Lee Codes from Quadratic Curves over Finite Fields}

\author{ Sihem Mesnager$^1$ \and Chunming Tang$^2$ \and  Yanfeng Qi$^3$
\thanks{This work was supported by
the National Natural Science Foundation of China
(Grant No. 11401480, 11531002). C. Tang
also acknowledges support from 14E013 and
CXTD2014-4 of China West Normal University.
Y. Qi also acknowledges support from
KSY075614050 of Hangzhou Dianzi University.
}
\thanks{S. Mesnager is with Department of Mathematics, Universities of Paris VIII and XIII and Telecom ParisTech, LAGA, UMR 7539, CNRS, Sorbonne Paris Cit\'{e}. e-mail: smesnager@univ-paris8.fr}
\thanks{C. Tang is with School of Mathematics and Information, China West Normal University, Nanchong, Sichuan,  637002, China. e-mail: tangchunmingmath@163.com
}

\thanks{Y. Qi is with School of Science, Hangzhou Dianzi University, Hangzhou, Zhejiang, 310018, China.
e-mail: qiyanfeng07@163.com
}

}

%


\maketitle

\begin{abstract}
Golomb and Welch conjectured in 1970 that there only exist perfect Lee codes for radius $t=1$ or dimension $n=1, 2$. It is admitted that
the existence and the construction of quasi-perfect Lee codes have to be studied since they are the best alternative to the perfect codes.
In this paper we firstly highlight the relationships between subset sums, Cayley graphs, and Lee linear codes and present some results.  Next, we present a new constructive method for constructing quasi-perfect Lee codes. Our approach uses subsets derived from some quadratic curves over finite fields (in odd characteristic) to derive  two classes of $2$-quasi-perfect Lee codes are given over the space $\mathbb{Z}_p^n$ for $n=\frac{p^k+1}{2}$ $(\text{with} ~p\equiv 1, -5 \mod 12 \text{ and } k \text{ is any integer}, \text{ or }  p\equiv -1, 5 \mod 12 \text{ and } k \text{ is an even integer})$ and $n=\frac{p^k-1}{2}$ $(\text{with }p\equiv -1, 5 \mod 12,  k \text{ is an odd integer} \text{ and } p^k>12)$, where $p$ is an  odd prime. Our codes encompass the quasi-perfect Lee codes constructed recently by Camarero and Mart\'{\i}nez. Furthermore, we  solve a conjecture proposed by Camarero and Mart\'{\i}nez
(in "quasi-perfect Lee codes of radius $2$ and arbitrarily large dimension", IEEE Trans. Inf. Theory, vol. 62, no. 3, 2016) by proving that the related Cayley graphs are Ramanujan or almost Ramanujan.  The Lee codes presented in this paper have applications to constrained and partial-response channels, in flash memories and decision diagrams.

\end{abstract}

\begin{IEEEkeywords}
Lee distance, quasi-perfect codes, subset sums, Cayley graphs, Ramanujan graphs, quadratic curves.
\end{IEEEkeywords}

%
\IEEEpeerreviewmaketitle

\section{Introduction}
Throughout this paper, $\mathbb{Z}$ and $\mathbb{Z}_{M}$ denote the ring of integers  and the ring of integer modulo $M$, respectively. By a \emph{Lee code} $\mathcal{C}$ of block size $n$ over $\mathbb{Z}$ (or $\mathbb{Z}_M$) we will understand a subset $\mathcal{C}$ of the infinite lattice $\mathbb{Z}^n$(or the finite integer lattice $\mathbb{Z}_M^n$) endowed by the Lee distance. If $\mathcal{C}$ has furthermore the structure of an additive group, then $\mathcal{C}$ is called \emph{Lee linear code}. Lee codes have many practical applications, in particular for the toroidal interconnection networks. They are also used for phase modulated and multilevel quantized-pulse-modulated channels (see e.g. \cite{Ast82,Ast85,Ber68,CW71} and \cite{RS94}). Moreover, it has been shown in \cite{AlB97,AB98,AKB97,BB97} and \cite{BBKA95} that these codes are the foundation of designing placement strategies to distribute commonly shared resources like input/output (I/O) devices over a toroidal networks. Similar concepts can be used to design fault-tolerant techniques for this kind of networks \cite{BB97}. Using space embeddings, Jiang et al. gave a method  in \cite{JSB10} to construct Charge-Constrained Rank-Modulation codes (CCRM codes) from Lee error-correcting codes, which could be employed for flash memories. Astola and Stankovic  considered in \cite{AS12} Lee codes to build decision diagrams.

Perfect Lee codes are the most interesting and important subclass of Lee codes. Unfortunately, the perfect $t$-error correcting Lee codes of block
length $n$ over $\mathbb{Z}$, and over $\mathbb{Z}_M$, $M\ge 2n+1$, shortly $PL(n,t)$ codes and $PL(n,t,M)$ codes, respectively, have been constructed only for $n=1,2$, and any $t$, and for $n\ge 3$ and $t=1$. Moreover, as suggested by the well-known and long-standing conjecture of Golomb and Welch \cite{GW70}, $PL(n,t)$ codes and $PL(n,t,M)$, $M\ge 2n+1$, codes do not exist in other cases.
Despite the considerable amount of attempts in this topic,  the conjecture is far being solved.

Although the Golomb-Welch conjecture has not been solved yet, its validity has been widely believed by the community. Therefore, failing finding perfect Lee codes, some codes which are "closed" to being perfect have been considered in the literature (see e.g. \cite{AB03}, where quasi-perfect codes over $\mathbb{Z}$ and $QPL(n,t)$ codes have been introduced). Also, in \cite{HG14} the authors presented some quasi-perfect codes for $n=3$ and few radii. Later, Queiroz et al. characterized  in \cite{QCMP13} quasi-perfect codes over Gaussian and Eisenstein-Jacobi integers. As a consequence, linear quasi-perfect Lee codes were obtained for $n=2$. In \cite{CM16}, from the quotient additive group of Gaussian integers, Camarero and Mart\'{i}nez constructed two-quasi-perfect Lee codes over the space $\mathbb{Z}_p^n$ for $p$ prime with $p\equiv \pm 5 \mod 12$, and $n=2 [\frac{p}{4}]$, where the notation $[\frac{p}{4}]$ stands for the closed integer to the rational number $\frac{p}{4}$.

In this manuscript, we firstly survey the connections between subset sums, Cayley graphs, and Lee linear codes. Next, we provide two classes of $2$-quasi-perfect Lee codes over the space $\mathbb{Z}_p^n$ for $n=\frac{p^k+1}{2}$ $(\text{with} ~p\equiv 1, -5 \mod 12 \text{ and } k \text{ is any integer}, \text{ or }\\ p\equiv -1, 5 \mod 12 \text{ and } k \text{ is an even integer})$ and $n=\frac{p^k-1}{2}$ $(\text{if }p\equiv -1, 5 \mod 12,  k \text{ is an odd integer} \text{ and } p^k>12)$, where $p$ is an  odd prime. Our classes are obtained by considering subsets derived from some quadratic curves over finite fields of odd characteristic. This paper generalizes the results of \cite{CM16} since the $2$-quasi-perfect Lee codes constructed by Camarero and Mart\'{\i}nez  in \cite{CM16} correspond in fact to our codes when $p\equiv \pm 5 \mod 12$ and $k=1$. Furthermore, This paper also proves that the related Cayley graphs are Ramanujan or almost Ramanujan. It solves a conjecture proposed by Camarero and Mart\'{\i}nez \cite{CM16}.

This paper is organized as follows. In Section
\ref{sec:preliminaries}, we recall and present some results (useful in the next sections) devoted to
exponential sums, Lee linear codes and Cayley graphs. Next, in
Section \ref{sec:subset sums}, we survey the relationships between subset sums, Cayley graphs, and Lee linear codes. In Section
\ref{sec:subsets}, we introduce two classes of subsets of  $\mathbb{F}_{q^2}$ and $\mathbb{F}_{q}^2$ derived from quadratic curves defined over finite fields $\mathbb{F}_q$ and provide several results related to  their  associate subset sums.
In Section \ref{sec:Cayley}, we show that the related graphs are Ramanujan or almost Ramanujan.
which solves a conjecture proposed by Camarero and Mart\'{\i}nez \cite{CM16}. In Section \ref{sec:quasi-perfect}, we present two infinite classes of
$2$-quasi-perfect Lee codes over finite alphabet.
\section{Preliminaries}\label{sec:preliminaries}

 We begin this section by fixing some notation which will be used throughout this paper.
\begin{itemize}
\item Given a set $S$, $\#	S$ denotes its cardinality;
\item $\mathbb{Z}$ denotes the ring of integers;
\item  $\mathbb{Z}_{M}$ denotes the ring of integers modulo $M$;
\item $\mathbb{F}_{q}$ denotes the finite field
with $q$ elements (where $q$ is a prime power $p^k$);
\item  $\mathbb{F}_q^{\times}$
denotes the multiplicative group of
$\mathbb{F}_q$;
\item Given a prime $p$, $p^*:=(-1)^{\frac{p-1}{2}} p$;
\item $\eta$ and $\eta_0$ denote the quadratic characters of $\mathbb{F}_q^{\times}$ and $\mathbb{F}_p^{\times}$, respectively;
\item $\textrm{SQ}_q$ and $\textrm{N\textrm{SQ}}_q$ denote the set of all  squares and nonsquares in $\mathbb{F}_q^{\times}$, respectively.
\item $\zeta_p=e^{\frac{2\pi\sqrt{-1}}{p}}$ is the primitive $p$-th root of unity in the complex field $\mathbb{C}$.
\end{itemize}
\subsection{Quadratic Reciprocity Law and exponential sums}
Some known results related to particular cases of the well-known \emph{Quadratic Reciprocity Law } (\cite{LN83}) will be necessary for our results. We  summarize those results the following proposition.
\begin{proposition}
If $p$ is an odd prime greater than 3,
then

{\rm (1)} $-1\in \textrm{SQ}_p$ if and only if $p\equiv 1\mod4$.

{\rm (2)} $3\in \textrm{SQ}_p$ if and only if
$p\equiv \pm 1 \mod 12$.

{\rm (3)} $-3\in \textrm{SQ}_p$ if and only if
$p\equiv 1$ or $-5 \mod 12$.
\end{proposition}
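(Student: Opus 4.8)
The plan is to reduce all three assertions to computations of Legendre symbols, and then to invoke the first supplementary law together with the Quadratic Reciprocity Law. Throughout, write $\left(\frac{a}{p}\right)$ for the Legendre symbol; by Euler's criterion $a\in\textrm{SQ}_p$ if and only if $\left(\frac{a}{p}\right)=1$.

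For (1) I would simply quote the first supplement $\left(\frac{-1}{p}\right)=(-1)^{\frac{p-1}{2}}$, which is $1$ exactly when $\frac{p-1}{2}$ is even, that is $p\equiv1\bmod4$. For (2) I would apply Quadratic Reciprocity to the pair $(3,p)$: since $3\equiv3\bmod4$ one gets $\left(\frac{3}{p}\right)\left(\frac{p}{3}\right)=(-1)^{\frac{p-1}{2}}$, hence $\left(\frac{3}{p}\right)=(-1)^{\frac{p-1}{2}}\left(\frac{p}{3}\right)$. Now $\left(\frac{p}{3}\right)=1$ iff $p\equiv1\bmod3$ and $\left(\frac{p}{3}\right)=-1$ iff $p\equiv2\bmod3$, while $(-1)^{\frac{p-1}{2}}=1$ iff $p\equiv1\bmod4$; so $\left(\frac{3}{p}\right)=1$ precisely when the residues of $p$ modulo $3$ and modulo $4$ carry the same sign, i.e. either $p\equiv1\bmod4$ and $p\equiv1\bmod3$, or $p\equiv3\bmod4$ and $p\equiv2\bmod3$. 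By the Chinese Remainder Theorem, and restricting to residues coprime to $12$ (automatic since $p>3$ is prime), these two cases are exactly $p\equiv1\bmod12$ and $p\equiv11\equiv-1\bmod12$. For (3) I would write $\left(\frac{-3}{p}\right)=\left(\frac{-1}{p}\right)\left(\frac{3}{p}\right)$ and substitute the two formulas just obtained, getting $\left(\frac{-3}{p}\right)=(-1)^{\frac{p-1}{2}}\cdot(-1)^{\frac{p-1}{2}}\left(\frac{p}{3}\right)=\left(\frac{p}{3}\right)$; hence $-3\in\textrm{SQ}_p$ iff $p\equiv1\bmod3$, which among residues coprime to $12$ means $p\equiv1$ or $7\bmod12$, and since $7\equiv-5\bmod12$ this is the stated condition.

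The argument is entirely routine and there is no genuine obstacle; the only care needed is the bookkeeping in the CRT step of (2), listing the four residue classes $1,5,7,11$ modulo $12$ and checking the sign of $\left(\frac{3}{p}\right)$ on each, together with the trivial observation $-5\equiv7\bmod12$ used in (3). Alternatively, one may simply cite these as the standard evaluations of $\left(\frac{3}{p}\right)$ and $\left(\frac{-3}{p}\right)$ (see \cite{LN83}).
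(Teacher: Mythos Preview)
Your argument is correct and entirely standard. Note, however, that the paper does not actually prove this proposition: it simply records these three facts as well-known consequences of Quadratic Reciprocity, with a citation to \cite{LN83}. Your derivation via the first supplement and reciprocity for the pair $(3,p)$, followed by the multiplicativity step $\left(\frac{-3}{p}\right)=\left(\frac{-1}{p}\right)\left(\frac{3}{p}\right)=\left(\frac{p}{3}\right)$, is exactly the textbook route and is correct in every detail, including the CRT bookkeeping in (2) and the identification $7\equiv-5\bmod12$ in (3).
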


We shall also use the following  result related to  quadratic exponential sums \cite{LN83}.
\begin{lemma}\label{cx2a}
Let $q=p^k$, with $p$ an odd prime. Then, for any $a\in \mathbb{F}_q,c\in \mathbb{F}_q^{\times}$, we have
$$
\sum_{x\in \mathbb{F}_q}
\zeta_p^{Tr_1^{k}(cx^2+ax)}
=(-1)^{k-1}\eta(c)
\sqrt{p^*}^k\zeta_p^{Tr_1^k(-\frac{a^2}{
4c})}.
$$
\end{lemma}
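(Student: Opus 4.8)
The plan is to reduce the given sum to a \emph{pure} quadratic Gauss sum by completing the square, and then to invoke the classical evaluation of that sum. Write $\chi(x)=\zeta_p^{Tr_1^k(x)}$ for the canonical additive character of $\mathbb{F}_q$, so that the left-hand side is $\sum_{x\in\mathbb{F}_q}\chi(cx^2+ax)$. Since $p$ is odd, $2c$ is invertible, and completing the square gives $cx^2+ax=c\bigl(x+\tfrac{a}{2c}\bigr)^2-\tfrac{a^2}{4c}$. Because $\chi$ is a homomorphism and $x\mapsto x+\tfrac{a}{2c}$ is a bijection of $\mathbb{F}_q$, the substitution $y=x+\tfrac{a}{2c}$ factors out the constant term:
\[
\sum_{x\in\mathbb{F}_q}\chi(cx^2+ax)=\chi\!\left(-\tfrac{a^2}{4c}\right)\sum_{y\in\mathbb{F}_q}\chi(cy^2)=\zeta_p^{Tr_1^k(-\frac{a^2}{4c})}\sum_{y\in\mathbb{F}_q}\chi(cy^2).
\]
This already isolates the claimed phase factor, so it remains to show that the pure quadratic sum equals $(-1)^{k-1}\eta(c)\sqrt{p^*}^k$.

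Second, I would express $S_c:=\sum_{y\in\mathbb{F}_q}\chi(cy^2)$ in terms of the quadratic Gauss sum $G(\eta):=\sum_{v\in\mathbb{F}_q^{\times}}\eta(v)\chi(v)$. The key observation is that for $u\in\mathbb{F}_q^{\times}$ the number of $y$ with $y^2=u$ is $1+\eta(u)$, while $y=0$ contributes the term $\chi(0)=1$. Grouping the sum by the value $u=y^2$ yields
\[
S_c=1+\sum_{u\in\mathbb{F}_q^{\times}}\bigl(1+\eta(u)\bigr)\chi(cu)=1+\sum_{u\in\mathbb{F}_q^{\times}}\chi(cu)+\sum_{u\in\mathbb{F}_q^{\times}}\eta(u)\chi(cu).
\]
The first inner sum equals $-1$ by orthogonality of $\chi$ (the total sum over $\mathbb{F}_q$ vanishes), cancelling the leading $1$. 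For the second, the substitution $v=cu$ together with $\eta(u)=\eta(c^{-1})\eta(v)=\eta(c)\eta(v)$ (as $\eta$ is quadratic, so $\eta(c^{-1})=\eta(c)$) gives $\sum_{u}\eta(u)\chi(cu)=\eta(c)G(\eta)$. Hence $S_c=\eta(c)G(\eta)$.

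Finally, I would evaluate $G(\eta)$. Its modulus is elementary — a standard computation of $G(\eta)\overline{G(\eta)}$ gives $|G(\eta)|^2=q$, and squaring gives $G(\eta)^2=\eta(-1)q$, pinning down $G(\eta)$ up to sign. The genuine difficulty, and the main obstacle, is the determination of this sign, namely that $G(\eta)=(-1)^{k-1}\sqrt{p^*}^k$; this is the content of Gauss's celebrated theorem on the sign of the quadratic Gauss sum and does \emph{not} follow from the elementary modulus computation. I would handle it by first establishing the base case over the prime field, $G(\eta_0)=\sqrt{p^*}$ (that is, $\sqrt{p}$ when $p\equiv 1\bmod 4$ and $i\sqrt{p}$ when $p\equiv 3\bmod 4$), and then lifting to $\mathbb{F}_{p^k}$ via the Davenport--Hasse relation $G(\eta)=(-1)^{k-1}G(\eta_0)^k$, using that $\eta=\eta_0\circ N_{\mathbb{F}_q/\mathbb{F}_p}$. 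Since this evaluation is a recorded classical result (\cite{LN83}), one may alternatively cite it directly at this step. Substituting $G(\eta)=(-1)^{k-1}\sqrt{p^*}^k$ into $S_c=\eta(c)G(\eta)$ and recombining with the phase factor from the first step completes the proof.
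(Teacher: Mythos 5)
Your proof is correct, but note that the paper itself offers no proof of this lemma at all: it is stated as a known result on quadratic exponential sums and attributed directly to \cite{LN83} (Lidl--Niederreiter), so there is no in-paper argument to compare against. What you have written is essentially the standard textbook derivation that underlies the cited result: completing the square to extract the phase $\zeta_p^{Tr_1^k(-a^2/(4c))}$, converting the pure sum $\sum_y \chi(cy^2)$ into $\eta(c)G(\eta)$ via the count $\#\{y : y^2=u\}=1+\eta(u)$, and then evaluating the quadratic Gauss sum. Each step checks out, including the small points that matter ($\eta(c^{-1})=\eta(c)$ because $\eta$ is quadratic; the cancellation of the $y=0$ term against $\sum_{u\neq 0}\chi(cu)=-1$). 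You also correctly identified the one genuinely non-elementary ingredient: the modulus computation only gives $G(\eta)^2=\eta(-1)q$, and pinning down the sign requires Gauss's theorem over $\mathbb{F}_p$ together with the Davenport--Hasse lift $G(\eta)=(-1)^{k-1}G(\eta_0)^k$ (using $\eta=\eta_0\circ N_{\mathbb{F}_q/\mathbb{F}_p}$), which reproduces exactly the factor $(-1)^{k-1}\sqrt{p^*}^k$ in the statement. In short: the paper cites, you prove, and your proof is the right one; citing \cite{LN83} for the Gauss sum evaluation, as you suggest, would make your write-up match the paper's level of reliance on the literature while keeping the reduction explicit.
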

 \emph{Kloosterman sums} form a special class of
exponential sums. For $(a,b)\in\mathbb{F}_q\times\mathbb{F}_{q}^\times$, the Kloosterman sum $K_q(a,b)$ is  defined by
$$
K_q(a,b)=\sum_{x\in \mathbb{F}_{q}^\times}\zeta_p^{Tr_1^{k}(
ax+\frac{b}{x})}.
$$
Kloosterman sums are related to several  mathematical and engineering problems and have been extensively studied in the literature. Unfortunately, it is
very hard to evaluate Kloosterman sums. The \emph{Hasse-Weil bound} on Kloosterman sums (\cite{LN83}) is an estimation given by the following statement.
\begin{theorem}\label{Kloost}
Let $(a,b)\in \mathbb{F}^\times_q\times \mathbb{F}^\times_q$. Then
$\vert K_q(a,b)\vert\leq 2\sqrt{q}$.
\end{theorem}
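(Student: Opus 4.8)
The plan is to deduce the estimate from Weil's theorem on the Riemann hypothesis for curves over finite fields (see, e.g., \cite{LN83}), by realizing $K_q(a,b)$ as the nontrivial part of a point count. Attach to the pair $(a,b)$ the Artin--Schreier curve
$$\mathcal{C}_{a,b}:\qquad y^p-y=ax+\frac{b}{x},\qquad x\in\mathbb{F}_q^{\times},$$
carrying the $\mathbb{F}_p$-action $y\mapsto y+c$ by translations. For a fixed $x\in\mathbb{F}_{q^m}^{\times}$ the equation $y^p-y=c$ has exactly $p$ solutions $y\in\mathbb{F}_{q^m}$ when $Tr_1^{mk}(c)=0$ and none otherwise, and $\sum_{\lambda\in\mathbb{F}_p}\zeta_p^{\lambda\, Tr_1^{mk}(c)}$ equals $p$ or $0$ accordingly. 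Summing over $x\in\mathbb{F}_{q^m}^{\times}$ and peeling off the term $\lambda=0$ gives, for every $m\geq 1$,
$$\#\mathcal{C}_{a,b}(\mathbb{F}_{q^m})=(q^m-1)+\sum_{\lambda\in\mathbb{F}_p^{\times}}K_{q^m}(\lambda a,\lambda b),$$
where $K_{q^m}$ denotes the Kloosterman sum over $\mathbb{F}_{q^m}$ and we used $a,b\in\mathbb{F}_q^{\times}$.

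Next I would extract the single term $\lambda=1$. The map $\mathcal{C}_{a,b}\to\mathbb{P}^1$, $(x,y)\mapsto x$, is a degree-$p$ cover, \'etale over $\mathbb{G}_m$ and totally, wildly ramified over $x=0$ and $x=\infty$, where $ax+b/x$ has a simple pole; Riemann--Hurwitz then gives genus $p-1$, so the numerator $P(T)=\prod_{i=1}^{2(p-1)}(1-\omega_i T)$ of the zeta function of $\mathcal{C}_{a,b}$ has degree $2(p-1)$. The $\mathbb{F}_p$-action factors $P(T)=\prod_{\lambda\in\mathbb{F}_p^{\times}}P_\lambda(T)$, where $P_\lambda$ is the $L$-polynomial of the rank-one Artin--Schreier sheaf on $\mathbb{G}_m$ attached to the character $c\mapsto\zeta_p^{\lambda c}$ pulled back along $x\mapsto ax+b/x$; a Grothendieck--Ogg--Shafarevich Euler-characteristic count (Swan conductor $1$ at $0$ and at $\infty$, no invariants at either point) gives $\deg P_\lambda=2$ for each $\lambda\neq 0$, consistent with $\sum_\lambda 2=2(p-1)$. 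Writing $P_1(T)=(1-\alpha T)(1-\beta T)$, the Grothendieck--Lefschetz trace formula yields $K_{q^m}(a,b)=-(\alpha^m+\beta^m)$ for all $m\geq 1$. Finally, Weil's theorem says every reciprocal root of $P(T)$ has absolute value exactly $\sqrt q$; in particular $|\alpha|=|\beta|=\sqrt q$, so
$$|K_q(a,b)|=|\alpha+\beta|\leq|\alpha|+|\beta|=2\sqrt q,$$
as claimed.

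The main obstacle is Weil's bound for curves itself: it is the substantive content here, and no argument can avoid some incarnation of the Riemann hypothesis over finite fields. The only genuinely new work is the bookkeeping in the second step, namely checking that the zeta numerator factors compatibly with the $\mathbb{F}_p$-action and that each factor has degree $2$ (equivalently, that the relevant $H^1_c$ is two-dimensional); this is precisely where the hypothesis $a,b\in\mathbb{F}_q^{\times}$ enters, forcing simple poles and Swan conductor $1$ at $0$ and $\infty$. One can also bypass the curve and appeal to Deligne's Weil~II directly for that rank-one sheaf: it is lisse and pure of weight $0$ on $\mathbb{G}_m$, hence $H^1_c$ is two-dimensional and pure of weight $1$, its two Frobenius eigenvalues have modulus $\sqrt q$, and $K_q(a,b)$ is minus their sum. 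When $p=2$ everything collapses to the case $\lambda=1$: $\mathcal{C}_{a,b}$ is an elliptic curve and the estimate is just Hasse's bound $|\#\mathcal{C}_{a,b}(\mathbb{F}_q)-(q+1)|\leq 2\sqrt q$.
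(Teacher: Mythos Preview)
The paper does not prove this theorem at all: it is quoted as background and attributed to \cite{LN83} with the words ``The Hasse--Weil bound on Kloosterman sums (\cite{LN83}) is an estimation given by the following statement,'' after which the theorem is stated without proof. So there is no ``paper's own proof'' to compare against.

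Your sketch is a legitimate outline of the standard proof via Artin--Schreier covers and Weil's theorem for curves (with the Deligne variant noted at the end), and the bookkeeping about the $\mathbb{F}_p$-action, the genus $p-1$, and the degree-$2$ factors $P_\lambda$ is correct in spirit. Since the paper treats the bound as a black box, your contribution is not ``the same approach'' but rather a supplied proof of a cited classical result. One small remark: the paper works in odd characteristic throughout, so your closing aside about $p=2$ is extraneous in this context, and your appeal to Swan conductors and Grothendieck--Ogg--Shafarevich, while correct, is heavier machinery than what Lidl--Niederreiter use; their treatment goes through the Carlitz--Uchiyama/Weil exponential-sum estimates directly. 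Either route is fine, but be aware that you are proving more than the paper asks you to justify.
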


\subsection{Lee linear codes}
Since Lee codes over finite alphabet are the target of our study, the natural space to be considered is the vector space $\mathbb{Z}_p^n$ over primitive finite fields.
Therefore, a code $\mathcal{C}$ will be a subset of $\mathbb{Z}_p^n$. This code is said to be linear if it is a subgroup of $\mathbb{Z}_p^n$. For
$\mathbf{a},\mathbf{b} \in \mathbb{Z}_p^n$ their Lee distance is defined as
$$
d_{L}(\mathbf{a},\mathbf{b})=\sum_{i=1}^n  min \{|s|: s\equiv \mathbf{a}_i-\mathbf{b}_i \mod p, s \in \mathbb{Z}\}.
$$
The weight of a word $\mathbf{c}$ is defined as its distance to the origin $O$, which will be denoted as $wt_{L}(\mathbf{c})=d_{L}(\mathbf{c},O)$.
For any positive integer $r$, the Lee sphere of radius $r$ is defined as all the points whose weight is less or equal to $r$, that is:
$$
B^n_r=\{\mathbf{c}: wt_{L}(\mathbf{c})\le r\}.
$$
Note that, when $p\ge 5$, then, for any dimension $n\ge 1$, we have $\# B^n_2=2n^2+2n+1$ (\cite{GW70}).

A code $\mathcal{C}$ is said to be $t\text{-}error~correcting$ if $t$ is the greatest integer such that for any word $\mathbf{w}$ there is at the most one codeword $\mathbf{c}\in \mathcal{C}$ with $d_{L}(\mathbf{w},\mathbf{c})\le t$. Thus, $t$ is called the $error ~correction $ of $\mathcal{C}$. A code $\mathcal{C}$ is said to be $r\text{-}covering$ if $r$ is the smallest integer such that for any word $\mathbf{w}$ there is at the least  one codeword $\mathbf{c}\in \mathcal{C}$ with $d_{L}(\mathbf{w},\mathbf{c})\le r$. Thus, $r$ denotes the covering radius of $\mathcal{C}$. A code that is both $t\text{-}$error correcting and $t\text{-}$covering is said to be \emph{perfect}. A code that is $t\text{-}$error correcting and $(t+1)\text{-}$covering is said to be $t$-\emph{quasi-perfect}.
Golomb and Welch  conjectured in \cite{GW70} that there only exist perfect Lee codes for $t=1$ or $n=1,2$. Therefore, the existence of quasi-perfect codes must be studied  since they are the best alternative to the prefect codes.

 Linear codes can be constructed from Abelian groups and their special subsets. In the following,   $\Gamma$ denotes an Abelian group with exponential $p$, that is, for any $\beta \in \Gamma $, $p \cdot \beta =0$. Thus, $\Gamma$ can be considered as a vector space over $\mathbb{F}_p$. Set $m:=\mbox{dim}_{\mathbb{F}_p} \Gamma$ and let $H$ be a set of generators of $\Gamma$, with $H=-H$ and $0\not \in H$. Thus, the cardinality $\#H$ of $H$  must be even. By setting, $\# H=2n$ one has $H=\{\pm \beta_1, \pm \beta_2, \cdots, \pm \beta_n \}$ and the associated linear code is
$$
\mathcal{C}(\Gamma; H)=\{(c_1,\cdots,c_n)\in \mathbb{F}_p^n: c_1\beta_1+\cdots +c_n\beta_n=0\}.
$$
The dimension of $\mathcal{C}(\Gamma; H)$ over $\mathbb{F}_p$ equals $n-m$.

\subsection{Cayley graph}
A graph $X$ is a triple consisting of a vertex set $V=V(X)$, an edge set $E=E(X)$ and a map that associates with each edge two vertices (not necessarily distinct) called its \emph{endpoints}. The degree of a vertex $v$ is the number of edges incident with $v$. A graph is called $k\text{-}regular$ if every vertex has degree $k$. Now, let us recall some basic definitions. The distance $d_{X}(x,y)$ between two vertices $x,y$ in a graph $X$ is defined as the number of edges in the shortest path from $x$ to $y$. The diameter of a graph $X$ is the maximum among distances between every pair of vertices.

To any graph, one can associate the $adjacency~ matrix$ $A$ which is an $N\times N$ matrix (where $N=\# V$) with rows and columns indexed by the elements of the vertex set and the $(x,y)$-th entry is the number of edges connecting $x$ and $y$. Since our graphs are undirected, the matrix $A$ is symmetric. Consequently, all of its eigenvalues are real. \emph{Ramanujan graphs} are good expander graphs that attain the spectral bound (\cite{Mur03}). More specifically, a $k$-regular graph $X$ is a Ramanujan graph if and only if, for every eigenvalue $\lambda$ of its adjacency matrix it holds either $|\lambda|=k$ or $|\lambda|\le 2\sqrt{k-1}$.

There is a simple procedure for constructing regular graphs using group theory. This can be described  as follows. Let $\Gamma$ and $H$ be defined as the previous subsection. Now construct the Cayley graph $Cay(\Gamma; H)$ by having the vertex set to be the elements of $\Gamma$ with $(\alpha,\beta)$ an edge if and only if $\beta-\alpha \in H$. Then, $Cay(\Gamma; H)$ is a $\#H$-regular graph.  Then the \emph{error correction capacity} of the Cayley graph $Cay(\Gamma; H)$ is defined as the greatest integer $t$ such that for every vertex $x$ there are $\# B^n_t$ vertices at distance $t$ or less from $x$, where $\# H =2n$.
 Note that Cayley graph is vertex-transitive and therefore it is enough to count the number of vertices around one vertex to determine its error correction capacity and the diameter can also be calculated as the maximum distance to one particular vertex, usually $0\in \Gamma$.

The eigenvalues of the Cayley graph
can be determined by the following theorem given in \cite{Mur03}.
\begin{theorem}\label{eigenvelue}
Let $\Gamma$ be a finite Abelian group and
$H$ a subset of $\Gamma$ with $H=-H$ and $0\not \in H$. The  eigenvalues of
the adjacency matrix of $Cay(\Gamma;H)$
are given by
$$
\lambda_{\chi}=\sum_{\beta \in H}\chi(\beta),
$$
where $\chi$ ranges over all characters of $\Gamma$.
\end{theorem}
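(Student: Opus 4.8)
The plan is to exhibit, for each character $\chi$ of $\Gamma$, an explicit eigenvector of the adjacency matrix $A$ of $Cay(\Gamma;H)$ and to compute the corresponding eigenvalue directly; because $\Gamma$ is Abelian, the characters suffice to diagonalize $A$ completely, so this will produce the whole spectrum.

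First I would fix an enumeration $\Gamma=\{\gamma_1,\dots,\gamma_N\}$ of the vertex set, so that $A$ is the $N\times N$ matrix with $A_{\gamma,\gamma'}=1$ when $\gamma'-\gamma\in H$ and $A_{\gamma,\gamma'}=0$ otherwise. This is well defined as a symmetric $0/1$ matrix precisely because $H=-H$ (so the adjacency relation is symmetric) and $0\notin H$ (so there are no loops). For a character $\chi\colon\Gamma\to\mathbb{C}^{\times}$, let $v_{\chi}\in\mathbb{C}^{N}$ be the column vector whose $\gamma$-entry is $\chi(\gamma)$.

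Next I would compute $Av_{\chi}$ entrywise: for each $\gamma\in\Gamma$,
$$
(Av_{\chi})_{\gamma}=\sum_{\gamma'\in\Gamma}A_{\gamma,\gamma'}\chi(\gamma')=\sum_{h\in H}\chi(\gamma+h)=\chi(\gamma)\sum_{h\in H}\chi(h),
$$
where the middle equality substitutes $\gamma'=\gamma+h$ with $h$ ranging over $H$, and the last uses multiplicativity of $\chi$. Hence $Av_{\chi}=\lambda_{\chi}v_{\chi}$ with $\lambda_{\chi}=\sum_{h\in H}\chi(h)$, i.e. $v_{\chi}$ is an eigenvector of $A$ with eigenvalue $\lambda_{\chi}$. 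I would also record that $\lambda_{\chi}\in\mathbb{R}$: since $H=-H$ and $\overline{\chi(h)}=\chi(h^{-1})=\chi(-h)$, one gets $\overline{\lambda_{\chi}}=\lambda_{\chi}$, which is consistent with the symmetry of $A$.

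Finally, to see that the $\lambda_{\chi}$ exhaust the spectrum of $A$ (with correct multiplicities), I would invoke the standard facts that a finite Abelian group $\Gamma$ has exactly $N=\#\Gamma$ distinct characters and that the vectors $\{v_{\chi}\}_{\chi}$ are pairwise orthogonal for the standard Hermitian inner product on $\mathbb{C}^{N}$ (by the orthogonality relations for characters), hence linearly independent. Thus $\{v_{\chi}\}_{\chi}$ is a basis of $\mathbb{C}^{N}$ consisting of eigenvectors of $A$, so $(\lambda_{\chi})_{\chi}$ is precisely the list of eigenvalues of $A$. There is no genuine obstacle here; the only point deserving a word of care is the bookkeeping if $H$ is permitted to be a multiset of generators, in which case $A_{\gamma,\gamma'}$ counts multiplicities and the same computation goes through verbatim with $\sum_{h\in H}$ read as a sum over the multiset — but under the running hypothesis that $H$ is a set with $H=-H$ and $0\notin H$, no such subtlety intervenes.
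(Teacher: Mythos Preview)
Your argument is correct and is the standard proof of this classical fact. Note, however, that the paper does not supply its own proof: the theorem is quoted from \cite{Mur03} and used as a black box, so there is no in-paper argument to compare against. Your proof is precisely the one that appears in standard references, so there is no discrepancy in approach.
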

\begin{remark}
Let $\# H=2n$. Notice that for the trivial character $\chi_0$, we have $\lambda_{\chi_0}=2n$.  For all
$\chi \neq 1$, one has
$$
|\sum_{\beta\in H}\chi(\beta)|<k.
$$
Then, the graph is connected. Thus, to construct Ranmanujan graphs, we require
$$
|\sum_{\beta\in H}\chi(\beta)|\le 2\sqrt{\#H -1}
$$
for every non-trivial character $\chi$ of $\Gamma$.
\end{remark}
\section{subset sums, Cayley graph and Lee code}\label{sec:subset sums}
The correspondence between subset sums, linear codes and Cayley graph is explained in this section. Firstly, we start by stating some fundamental definitions on subset sums. For any subsets
$A$ and $B$ of an Abelien group,
define $-A=\{-x:x\in A\}$,
$A+B=\{x+y: x\in A,y\in B\}$ and
 $A^{(i+1)}
=A^{(i)}+A$ with $A^{(1)}=A$ and $A^{(0)}=\{0\}$.

We introduce the following notions.
\begin{definition}
Given a finite Abelian group $\Gamma$ and a set of generators $H=\{\pm \beta_1, \cdots, \pm \beta_n\}$ with cardinality $\#H =2n$. The \emph{expansion critical index} of $H$ is defined as the greatest integer $t$ such that $ \# (\cup_{i=0}^{t} H^{(i)}) = \# B^n_t$. The \emph{expansion limit index} of $H$ is defined as the smallest  integer $r$ such that $ \cup_{i=0}^{r} H^{(i)} = \Gamma$.
\end{definition}

\begin{theorem}\label{three quantities 0}
Let $p\ge 5$ be an odd prime, $\Gamma$ be a finite Abelian group with exponential $p$ and $H=\{\pm \beta_1, \cdots, \pm \beta_n \}$ be a set of generators with $\# H=2n$. Then, the following three quantities  coincide,\\
\rm (1) the expansion critical index of subset $H$;\\
\rm (2) the error correction capacity of Cayley graph $Cay(\Gamma;H)$;\\
\rm (3) the error correction of Lee linear code $\mathcal{C}(\Gamma;H)$.
\end{theorem}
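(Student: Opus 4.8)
The plan is to realize all three quantities as one and the same integer: the largest $t$ for which the natural map $\phi$ below, from the Lee ball $B^n_t$ onto $\bigcup_{i=0}^{t}H^{(i)}$, is not merely surjective but bijective. Write $\Gamma$ additively and regard it as an $\mathbb{F}_p$-vector space, and introduce the group homomorphism $\phi\colon\mathbb{F}_p^n\to\Gamma$, $\phi(c_1,\dots,c_n)=c_1\beta_1+\cdots+c_n\beta_n$, so that by construction $\mathcal{C}(\Gamma;H)=\ker\phi$. For $c\in\mathbb{F}_p$ let $\widetilde{c}\in\{-\tfrac{p-1}{2},\dots,\tfrac{p-1}{2}\}$ denote its balanced representative; then the Lee weight of a coordinate equals $|\widetilde{c}|$, and $c\beta=\widetilde{c}\,\beta$ for every $\beta\in\Gamma$ because $\Gamma$ has exponent $p$. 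The first and main step would be the dictionary $\phi(B^n_t)=\bigcup_{i=0}^{t}H^{(i)}$, valid for all $t\ge 0$: if $wt_L(\mathbf{c})\le t$ then $\phi(\mathbf{c})=\sum_i\widetilde{c}_i\beta_i$ is a sum of $\sum_i|\widetilde{c}_i|\le t$ elements of $H$; conversely, any element of $\bigcup_{i=0}^{t}H^{(i)}$ is a sum of at most $t$ elements of $H$, which after collecting equal $\pm\beta_i$ reads $\sum_i a_i\beta_i$ with $a_i\in\mathbb{Z}$ and $\sum_i|a_i|\le t$, and reducing the $a_i$ modulo $p$ cannot increase $\sum_i|a_i|$ (since $|\widetilde{a}|\le|a|$ for every integer $a$), so it is the image of a word of Lee weight $\le t$. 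Hence $\phi$ maps $B^n_t$ onto $\bigcup_{i=0}^{t}H^{(i)}$, so $\#\bigl(\bigcup_{i=0}^{t}H^{(i)}\bigr)\le\#B^n_t$, with equality exactly when $\phi|_{B^n_t}$ is injective.

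Next I would identify (1) with (2). In $Cay(\Gamma;H)$ a walk of length $j$ starting at $0$ ends at a sum of $j$ elements of $H$, and conversely, so the ball of radius $t$ about $0$ is exactly $\bigcup_{i=0}^{t}H^{(i)}$; since the graph is vertex-transitive, every vertex has the same number of vertices within distance $t$. Therefore the error correction capacity of $Cay(\Gamma;H)$ is the largest $t$ with $\#\bigl(\bigcup_{i=0}^{t}H^{(i)}\bigr)=\#B^n_t$, which is by definition the expansion critical index of $H$. (Both indices are well defined: equality is immediate for $t=0,1$, and — as soon as the code is nontrivial, i.e.\ $H$ is not a basis of $\Gamma$ — it fails for large $t$, since then $\#\bigl(\bigcup_{i=0}^{t}H^{(i)}\bigr)$ stabilises at $\#\Gamma<p^n=\#B^n_t$.)

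Then I would identify (1) with (3). Since $\mathcal{C}(\Gamma;H)=\ker\phi$ is linear and the Lee metric is translation invariant, its minimum distance is $d=\min\{wt_L(\mathbf{c}):\mathbf{0}\neq\mathbf{c}\in\mathcal{C}(\Gamma;H)\}$, and — because two words at Lee distance $2t$ always admit a common word at Lee distance $t$ from each — the standard argument shows that the error correction of $\mathcal{C}(\Gamma;H)$ is the largest $t$ with $d\ge 2t+1$. So it remains to check $\phi|_{B^n_t}$ injective $\iff d\ge 2t+1$. If $d\le 2t$, pick $\mathbf{0}\neq\mathbf{c}\in\mathcal{C}(\Gamma;H)$ with $wt_L(\mathbf{c})=s\le 2t$, write $\mathbf{c}=\mathbf{u}_1+\cdots+\mathbf{u}_s$ as a sum of $s$ signed standard basis vectors, and set $\mathbf{c}_1=\mathbf{u}_1+\cdots+\mathbf{u}_{\lceil s/2\rceil}$ and $\mathbf{c}_2=-(\mathbf{u}_{\lceil s/2\rceil+1}+\cdots+\mathbf{u}_s)$; then $\mathbf{c}_1,\mathbf{c}_2\in B^n_t$, $\mathbf{c}_1\neq\mathbf{c}_2$, and $\phi(\mathbf{c}_1)=\phi(\mathbf{c}_2)$ because $\mathbf{c}_1-\mathbf{c}_2=\mathbf{c}\in\ker\phi$, so $\phi|_{B^n_t}$ is not injective. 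Conversely, distinct $\mathbf{c}_1,\mathbf{c}_2\in B^n_t$ with $\phi(\mathbf{c}_1)=\phi(\mathbf{c}_2)$ produce the nonzero codeword $\mathbf{c}_1-\mathbf{c}_2$, of Lee weight at most $wt_L(\mathbf{c}_1)+wt_L(\mathbf{c}_2)\le 2t$, so $d\le 2t$. Combining the three steps, all three quantities equal $\max\{t:\phi|_{B^n_t}\text{ is injective}\}$.

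I expect no conceptual difficulty here; the real work is careful bookkeeping. The two points to be handled with care are the dictionary of Step 1 — in particular the elementary but easily mishandled fact that replacing integer coefficients by their balanced residues modulo $p$ does not increase $\sum_i|a_i|$ — and the passage in Step 3 from the definition of error correction to the clean condition $d\ge 2t+1$, which relies on the existence of midpoints in the Lee metric. The standing hypothesis $p\ge 5$ plays only a minor role here, ensuring that the Lee balls $B^n_t$ have their usual shape; beyond that, the theorem is a formal unwinding of the three definitions together with the vertex-transitivity of Cayley graphs.
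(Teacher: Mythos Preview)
Your proof is correct and more self-contained than the paper's. Both you and the paper handle the equivalence $(1)\Leftrightarrow(2)$ in the same way, by observing that the ball of radius $t$ about $0$ in $Cay(\Gamma;H)$ is exactly $\bigcup_{i=0}^{t}H^{(i)}$ and invoking vertex-transitivity. For $(2)\Leftrightarrow(3)$, however, the paper simply cites Theorem~4 of Camarero and Mart\'{\i}nez \cite{CM16}, whereas you supply the full argument via the syndrome map $\phi$: establishing the dictionary $\phi(B^n_t)=\bigcup_{i=0}^{t}H^{(i)}$, reducing the error-correction capability to the condition $d\ge 2t+1$ (using the existence of Lee-metric midpoints), and then showing this is equivalent to injectivity of $\phi|_{B^n_t}$. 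What your approach buys is independence from the external reference and an explicit unification of all three quantities as $\max\{t:\phi|_{B^n_t}\text{ injective}\}$; what the paper's approach buys is brevity.
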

\begin{proof}
Let $D_i$ be the set of  vertices  at distance $i$ from $0$. Then, from the definition of Cayley graph $Cay(\Gamma;H)$, the vertices at distance $t$ or less from $0$ is the set $\cup_{i=0}^{t} D_i=\cup _{i=0}^{t} H^{(i)}$. Thus, the expansion critical index of subset $H$ equals the error correction capacity of Cayley graph $Cay(\Gamma;H)$. By Theorem 4 in \cite{CM16}, the error correction capacity of Cayley graph $Cay(\Gamma;H)$ coincides with  the error correction of Lee linear code $\mathcal{C}(\Gamma;H)$, which completes the proof.
\end{proof}

\begin{theorem}\label{three quantities 1}
Let $p\ge 5$ be an odd prime, $\Gamma$ be a finite Abelian group with exponential $p$ and $H=\{\pm \beta_1, \cdots, \pm \beta_n \}$ be a set of generators with $\# H=2n$. Then, the following three quantities coincide,\\
\rm (1) the expansion limit index of subset $H$;\\
\rm (2) the diameter of Cayley graph $Cay(\Gamma;H)$;\\
\rm (3) the covering radius of Lee linear code $\mathcal{C}(\Gamma;H)$.
\end{theorem}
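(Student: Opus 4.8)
The plan is to mirror the argument used for Theorem \ref{three quantities 0}, working with the same three-way dictionary between the subset $H$, the Cayley graph $Cay(\Gamma;H)$, and the linear code $\mathcal{C}(\Gamma;H)$, but now tracking when the iterated sumsets \emph{fill up} the whole group rather than when they stay as small as possible.

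First I would establish the equality of (1) and (2). Letting $D_i$ denote the set of vertices at graph distance exactly $i$ from $0$ in $Cay(\Gamma;H)$, the same observation as in the proof of Theorem \ref{three quantities 0} gives $\cup_{i=0}^{r} D_i = \cup_{i=0}^{r} H^{(i)}$, so the set of vertices within distance $r$ of $0$ is precisely $\cup_{i=0}^{r} H^{(i)}$. Hence $\cup_{i=0}^{r} H^{(i)} = \Gamma$ holds if and only if every vertex lies within distance $r$ of $0$. Since $H$ generates $\Gamma$, the graph is connected, so such an $r$ exists; the smallest one is by definition the expansion limit index of $H$, and, because $Cay(\Gamma;H)$ is vertex-transitive, it is also the diameter of $Cay(\Gamma;H)$. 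This proves (1) $=$ (2).

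Next I would establish (2) $=$ (3). Consider the surjective group homomorphism $\phi: \mathbb{F}_p^n \to \Gamma$ defined by $\phi(c_1, \ldots, c_n) = \sum_{i=1}^n c_i \beta_i$, whose kernel is exactly $\mathcal{C}(\Gamma;H)$; thus the cosets of $\mathcal{C}(\Gamma;H)$ are in bijection with the elements of $\Gamma$. The key identity is that for $\gamma \in \Gamma$ the graph distance $d_X(0, \gamma)$ equals $\min\{ wt_L(w): w \in \mathbb{F}_p^n,\ \phi(w) = \gamma\}$, i.e. the minimal Lee weight in the coset $\phi^{-1}(\gamma)$; this is exactly the content of the correspondence used for Theorem 4 in \cite{CM16}, since each step along an edge changes one coordinate by $\pm 1$ modulo $p$, and a Lee-minimal coset representative realizes a shortest walk from $0$ to $\gamma$. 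Taking the maximum over all $\gamma \in \Gamma$, the left-hand side becomes the diameter of $Cay(\Gamma;H)$, while the right-hand side becomes $\max_{v}\min_{c \in \mathcal{C}} wt_L(v - c) = \max_v \min_{c \in \mathcal{C}} d_L(v, c)$, which is the covering radius of $\mathcal{C}(\Gamma;H)$. This gives (2) $=$ (3) and completes the proof.

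The only genuinely delicate point is the identity $d_X(0,\gamma) = \min\{ wt_L(w) : \phi(w) = \gamma\}$. One direction is immediate: a walk of length $\ell$ from $0$ to $\gamma$ uses $\ell$ steps of the form $\pm \beta_i$, which collect into a coordinate vector of Lee weight at most $\ell$. The reverse direction requires expanding a representative $w$ of Lee weight $\ell$ into $\sum_i \min\{|s|: s \equiv c_i \bmod p\}$ unit steps of the appropriate sign, using $H = -H$ so that both $+\beta_i$ and $-\beta_i$ are available and $p \ge 5$ so that the shortest-residue convention matches the Lee metric; this yields a walk of length exactly $\ell$. Since this identity is precisely what is already invoked from \cite{CM16} on the error-correction/covering side, no new work beyond citing it is needed, and the remainder of the argument is the routine bookkeeping sketched above. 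I expect the vertex-transitivity reduction and this coset-representative identity to be the substantive steps, with everything else being formal.
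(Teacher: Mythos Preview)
Your proposal is correct and follows essentially the same approach as the paper: you establish (1)=(2) via the identity $\cup_{i=0}^{r} D_i = \cup_{i=0}^{r} H^{(i)}$ together with vertex-transitivity, and then invoke Theorem~4 of \cite{CM16} for (2)=(3), exactly as the paper does. The only difference is that you unpack the coset-representative identity behind that citation more explicitly, whereas the paper simply cites it.
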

\begin{proof}
Let $D_i$ be defined as in the proof of Theorem \ref{three quantities 0} and the expansion limit index of subset $H$ is $r$. Then, the diameter of Cayley graph $Cay(\Gamma;H)$ is less than or equal to $r$. From the definition of expansion limit index, $(\cup_{i=0}^{r} H^{(i)})\setminus(\cup_{i=0}^{r-1} H^{(i)})\neq \emptyset$. The vertices in $(\cup_{i=0}^{r} H^{(i)})\setminus (\cup_{i=0}^{r-1} H^{(i)})$ has distance $r$ from $0$. Hence, the diameter of Cayley graph $Cay(\Gamma;H)$ is  equal to $r$. By Theorem 4 in \cite{CM16}, the diameter of Cayley graph $Cay(\Gamma;H)$ coincides with the covering radius of Lee linear code $\mathcal{C}(\Gamma;H)$, which completes the proof.
\end{proof}

\begin{theorem}\label{2 and 3}
Let $p\ge 5$ be an odd prime, $\Gamma$ be a finite Abelian group with exponential $p$ and $H=\{\pm \beta_1, \cdots, \pm \beta_n \}$ be a set of generators with $\# H=2n$. If $2n^2+2n+1<\# \Gamma< \frac{1}{3}(1+2n)(3+2n+2n^2)$, $\# (\cup_{i=0}^{2} H^{(i)})=2n^2+2n+1$ and $\cup_{i=0}^{3} H^{(i)}=\Gamma$. Then, the expansion critical index and the expansion limit index of subset $H$ are 2 and 3, respectively.
\end{theorem}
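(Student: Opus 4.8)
The plan is to observe that the bound $\tfrac13(1+2n)(3+2n+2n^2)$ occurring in the hypothesis is precisely the cardinality of the radius-$3$ Lee sphere. Indeed, the Golomb--Welch sphere count $\#B^n_r=\sum_{k\ge 0}2^k\binom nk\binom rk$ gives, for $r=3$,
$$\#B^n_3\;=\;1+6n+6n(n-1)+\tfrac43\,n(n-1)(n-2)\;=\;\tfrac13(1+2n)(3+2n+2n^2),$$
so the three hypotheses say, respectively, that $\#B^n_2<\#\Gamma<\#B^n_3$, that $\#(\cup_{i=0}^2 H^{(i)})=\#B^n_2$, and that $\cup_{i=0}^3 H^{(i)}=\Gamma$. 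Set $a_t:=\#(\cup_{i=0}^t H^{(i)})$ and $b_t:=\#B^n_t$. Two remarks will be used repeatedly: every set $\cup_{i=0}^t H^{(i)}$ lies in $\Gamma$, so $a_t\le\#\Gamma$ for all $t$; and $B^n_t\subseteq B^n_{t+1}$, so $(b_t)_t$ is non-decreasing.

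For the expansion critical index, the second hypothesis gives $a_2=b_2$, so the index is $\ge 2$. For every $t\ge 3$, on the other hand, $a_t\le\#\Gamma<\tfrac13(1+2n)(3+2n+2n^2)=b_3\le b_t$, hence $a_t\ne b_t$. Thus $t=2$ is the greatest integer with $a_t=b_t$, i.e. the expansion critical index equals $2$.

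For the expansion limit index, the third hypothesis gives $\cup_{i=0}^3 H^{(i)}=\Gamma$, so the index is $\le 3$. For every $t\le 2$ we have $a_t\le a_2=2n^2+2n+1<\#\Gamma$ by the first hypothesis, so $\cup_{i=0}^t H^{(i)}\ne\Gamma$. Thus $r=3$ is the smallest integer with $\cup_{i=0}^r H^{(i)}=\Gamma$, i.e. the expansion limit index equals $3$.

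The only step with genuine content is the closed form $\#B^n_3=\tfrac13(1+2n)(3+2n+2n^2)$: it is a routine evaluation of the Lee-sphere count, but --- exactly as the quoted identity $\#B^n_2=2n^2+2n+1$ requires $p\ge 5$ --- one must note that a single coordinate of $\mathbb{Z}_p^n$ cannot carry Lee weight $3$ when $p$ is too small, so the identity (and hence the whole argument) is safe for $p\ge 7$, which covers the intended applications. Everything else is bookkeeping; together with Theorems \ref{three quantities 0} and \ref{three quantities 1}, the statement shows that the Lee linear code $\mathcal{C}(\Gamma;H)$ is $2$-quasi-perfect and that $Cay(\Gamma;H)$ has error-correction capacity $2$ and diameter $3$.
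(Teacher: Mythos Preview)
Your proof is correct and follows the same approach as the paper: identify the numerical bounds in the hypothesis as $\#B^n_2$ and $\#B^n_3$, then read off the two indices directly from the definitions. Your write-up is in fact more careful than the paper's---you spell out the monotonicity arguments and correctly flag that the closed form for $\#B^n_3$ needs $p\ge 7$ (a point the paper glosses over); note, however, that $p=5$ does occur in the later applications, where the conclusion still holds because $\#\Gamma=q^2$ is far below even the reduced value of $\#B^n_3$.
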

\begin{proof}
Note that the $n$-dimensional sphere of radius $2$ has cardinality $\# B^n_2 =2n^2+2n+1$ and the $n$-dimensional sphere of radius $3$ has cardinality $\# B^n_3 =\frac{1}{3}(1+2n)(3+2n+2n^2)$. From the definitions of expansion critical index and expansion limit index,  the expansion critical index and the expansion limit index of subset $H$ are 2 and 3, respectively. This completes the proof.
\end{proof}

From Theorems \ref{2 and 3}, \ref{three quantities 0} and \ref{three quantities 1}, we derive the following statement.

\begin{theorem}
Let $p\ge 5$ be an odd prime, $\Gamma$ be a finite Abelian group with exponential $p$ and $H=\{\pm \beta_1, \cdots, \pm \beta_n \}$ be a set of generators with $\# H=2n$. If $2n^2+2n+1<\# \Gamma< \frac{1}{3}(1+2n)(3+2n+2n^2)$, $\# (\cup_{i=0}^{2} H^{(i)})=2n^2+2n+1$ and $\cup_{i=0}^{3} H^{(i)}=\Gamma$. Then, $\mathcal{C}(\Gamma;H)$ is a linear $2$-quasi-perfect $p$-ary Lee code over $\mathbb{F}_p^n$ with dimension $n-dim_{\mathbb{F}_p} \Gamma$.

\end{theorem}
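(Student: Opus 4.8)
The plan is to assemble this statement directly from the three preceding theorems; it is essentially a corollary, so the strategy is a chaining argument rather than a new computation. First I would invoke Theorem~\ref{2 and 3}: since by hypothesis $2n^2+2n+1<\#\Gamma<\frac{1}{3}(1+2n)(3+2n+2n^2)$, $\#(\cup_{i=0}^{2}H^{(i)})=2n^2+2n+1$, and $\cup_{i=0}^{3}H^{(i)}=\Gamma$, that theorem yields at once that the expansion critical index of $H$ equals $2$ and the expansion limit index of $H$ equals $3$.

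Next I would feed these two equalities into Theorems~\ref{three quantities 0} and \ref{three quantities 1}. By Theorem~\ref{three quantities 0}, the error correction of the Lee linear code $\mathcal{C}(\Gamma;H)$ coincides with the expansion critical index of $H$, hence equals $2$; by Theorem~\ref{three quantities 1}, the covering radius of $\mathcal{C}(\Gamma;H)$ coincides with the expansion limit index of $H$, hence equals $3$. Since a code that is $t$-error correcting and $(t+1)$-covering is by definition $t$-quasi-perfect, it follows that $\mathcal{C}(\Gamma;H)$ is $2$-quasi-perfect.

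Finally I would record the structural facts about the ambient space. By construction $\mathcal{C}(\Gamma;H)$ is the kernel of the $\mathbb{F}_p$-linear map $(c_1,\dots,c_n)\mapsto c_1\beta_1+\cdots+c_n\beta_n$ on $\mathbb{F}_p^n$, hence it is a linear $p$-ary code over $\mathbb{F}_p^n$, and, as recalled in Section~\ref{sec:preliminaries}, its dimension over $\mathbb{F}_p$ equals $n-\dim_{\mathbb{F}_p}\Gamma$. Combining the three points completes the proof.

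I do not expect a genuine obstacle here, since all the substantive content — the counting of the Lee spheres $\#B^n_2=2n^2+2n+1$ and $\#B^n_3=\frac{1}{3}(1+2n)(3+2n+2n^2)$, and the identification of the combinatorial index of $H$ with the code's error-correction and covering radius via Theorem~4 of \cite{CM16} — is already carried out in Theorems~\ref{2 and 3}, \ref{three quantities 0} and \ref{three quantities 1}. The only subtlety worth flagging is that the \emph{strict} upper bound $\#\Gamma<\#B^n_3$ is precisely what prevents the expansion limit index from collapsing to $2$; it guarantees the covering radius is genuinely $3$, so the code is quasi-perfect and not perfect.
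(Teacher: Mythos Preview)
Your argument is correct and matches the paper's own proof, which simply states that the result follows from Theorems~\ref{2 and 3}, \ref{three quantities 0} and~\ref{three quantities 1}; you have just written out the chaining explicitly. One small inaccuracy in your closing remark: it is the \emph{lower} bound $\#\Gamma>2n^2+2n+1$ that prevents the expansion limit index from dropping to $2$ (ensuring covering radius $3$), while the \emph{upper} bound $\#\Gamma<\#B^n_3$ is what caps the expansion critical index at $2$ (ensuring the code is not $3$-error-correcting).
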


The next theorem follows  from Theorems \ref{2 and 3}, \ref{three quantities 0} and  \ref{three quantities 1}.

\begin{theorem}
Let $p\ge 5$ be an odd prime, $\Gamma$ be a finite Abelian group with exponential $p$ and $H=\{\pm \beta_1, \cdots, \pm \beta_n \}$ be a set of generators with $\# H=2n$. If $\# \Gamma=2n^2+2n+1$ and $\cup_{i=0}^{2} H^{(i)}=\Gamma$. Then, $\mathcal{C}(\Gamma;H)$ is a linear $2$-perfect $p$-ary Lee code over $\mathbb{F}_p^n$ with dimension $n-dim_{\mathbb{F}_p} \Gamma$.
\end{theorem}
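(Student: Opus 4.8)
The plan is to establish the perfect-code analogue of Theorem~\ref{2 and 3} and then feed it into Theorems~\ref{three quantities 0} and~\ref{three quantities 1}, exactly as in the $2$-quasi-perfect case. So the first step is: under the hypotheses $\#\Gamma = 2n^2+2n+1$ and $\cup_{i=0}^{2}H^{(i)} = \Gamma$, show that both the expansion critical index and the expansion limit index of $H$ are equal to $2$. Once this is done, Theorem~\ref{three quantities 0} gives that the error correction of $\mathcal{C}(\Gamma;H)$ equals the critical index, hence is $2$, and Theorem~\ref{three quantities 1} gives that its covering radius equals the limit index, hence is $2$; a code that is simultaneously $2$-error correcting and $2$-covering is $2$-perfect by definition, which is the claim. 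The dimension count $n-\dim_{\mathbb{F}_p}\Gamma$ is the one already recorded for $\mathcal{C}(\Gamma;H)$ in Section~\ref{sec:preliminaries}.

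For the critical index I would argue as follows. Recall that for $p\ge 5$ one has $\#B^n_2 = 2n^2+2n+1$. Since $\cup_{i=0}^{2}H^{(i)} = \Gamma$ and $\#\Gamma = 2n^2+2n+1$, we get $\#(\cup_{i=0}^{2}H^{(i)}) = \#B^n_2$, so the critical index is at least $2$. It cannot reach $3$: as $\cup_{i=0}^{3}H^{(i)} = \Gamma$ too, its cardinality is still $2n^2+2n+1$, whereas $\#B^n_3 = \frac{1}{3}(1+2n)(3+2n+2n^2) > 2n^2+2n+1$ for every $n\ge 1$ (the difference equals $\frac{2n(2n^2+1)}{3}>0$). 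Hence the critical index is exactly $2$. For the limit index, $\cup_{i=0}^{2}H^{(i)} = \Gamma$ forces it to be at most $2$; and it cannot be $\le 1$, because $\cup_{i=0}^{1}H^{(i)} = \{0\}\cup H$ has at most $2n+1$ elements while $\#\Gamma = 2n^2+2n+1 > 2n+1$ for $n\ge 1$. So the limit index is exactly $2$.

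Combining the two indices with Theorems~\ref{three quantities 0} and~\ref{three quantities 1} as indicated then yields that $\mathcal{C}(\Gamma;H)$ is $2$-error correcting and $2$-covering, hence $2$-perfect; it is $\mathbb{F}_p$-linear by construction, with $\dim_{\mathbb{F}_p}\mathcal{C}(\Gamma;H) = n-\dim_{\mathbb{F}_p}\Gamma$. I do not expect a genuine obstacle here: the argument is a bookkeeping combination of three already established results, and the only points that need care are the strict inequalities $\#B^n_3 > \#B^n_2$ and $\#\Gamma > 2n+1$, both trivial for $n\ge 1$, which are precisely what rule out the degenerate possibilities that the critical index exceeds $2$ or the limit index drops below $2$.
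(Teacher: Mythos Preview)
Your proposal is correct and follows exactly the route the paper indicates: establish that both the expansion critical index and the expansion limit index equal $2$ (the perfect-code analogue of Theorem~\ref{2 and 3}) and then invoke Theorems~\ref{three quantities 0} and~\ref{three quantities 1}. In fact you supply more detail than the paper, which simply states that the result ``follows from Theorems~\ref{2 and 3}, \ref{three quantities 0} and~\ref{three quantities 1}'' without spelling out the needed inequalities.
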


\section{subsets and subset sums from quadratic curves}\label{sec:subsets}
Let $q=p^k$ with $p$ an odd prime, and
 $\delta$ be a
quadratic nonresidue in $\mathbb{F}_q$.
Then $\mathbb{F}_q[\sqrt{\delta}]$
is an extension of $\mathbb{F}_q$ with degree 2, which is
denoted by $\mathbb{F}_{q^2}$.
For any $z
=x+\sqrt{\delta}y (x,y\in \mathbb{F}_q)$,  define $\overline{z}
=x-\sqrt{\delta}y$ and
$\mathcal{N}(z)=z\cdot \overline{z}
=x^2-\delta y^2$. Then
$\mathcal{N}(\cdot)$ is a surjective morphism from $\mathbb{F}_{q^2}^{\times}
$ to $\mathbb{F}_{q}^{\times}$.
Define a subset of $\mathbb{F}_{q^2}$
\begin{equation}\label{H+}
H_{+}=\{z\in \mathbb{F}_{q^2}:
\mathcal{N}(z)=1
\}.
\end{equation}
Then $\#H_{+}=q+1$ and
$H_{+}$ can also be view as the set of points on the following quadratic curve over
$\mathbb{F}_q$:
$$
x^2-\delta y^2=1.
$$
We  will also considering the following subset of
$\mathbb{F}_q^2$:
\begin{equation}\label{H-}
H_{-}=\{(x,\frac{1}{x}): x\in \mathbb{F}_q^{
\times}\}.
\end{equation}
Then $H_{-}$ is the set of points on the quadratic curve
$$
xy=1.
$$

In the following subsections, we will give some properties of two subsects $H_{+}$ and
$H_{-}$.
\subsection{Some results on subset sums for $H_{+}$}

\begin{lemma}\label{c=q+1}
Let $\delta\in \mathrm{NSQ}_q$ and $c\in \mathbb{F}_q^{\times}$.
Then $\#\{(x,y): x^2-\delta y^2=c
\}=q+1$.
\end{lemma}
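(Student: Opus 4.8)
The plan is to count the $\mathbb{F}_q$-points on the affine conic $x^2 - \delta y^2 = c$ directly via character sums, using Lemma \ref{cx2a}. First I would write the number of solutions as
$$
N = \sum_{u,v \in \mathbb{F}_q} \frac{1}{q}\sum_{t \in \mathbb{F}_q} \zeta_p^{Tr_1^k\big(t(u^2 - \delta v^2 - c)\big)},
$$
isolating the $t=0$ term (which contributes $q$) and, for each $t \in \mathbb{F}_q^\times$, factoring the remaining exponential sum as a product of two one-variable Gaussian sums in $u$ and in $v$ respectively, times $\zeta_p^{Tr_1^k(-tc)}$. Applying Lemma \ref{cx2a} with $a=0$ to each factor gives $\sum_u \zeta_p^{Tr_1^k(tu^2)} = (-1)^{k-1}\eta(t)\sqrt{p^*}^k$ and $\sum_v \zeta_p^{Tr_1^k(-\delta t v^2)} = (-1)^{k-1}\eta(-\delta t)\sqrt{p^*}^k$.

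Next I would multiply these together. The product of the two Gaussian sums is $(-1)^{2k-2}\eta(t)\eta(-\delta t)(\sqrt{p^*}^k)^2 = \eta(-\delta)\eta(t)^2 p^{*k} = \eta(-\delta)\, p^{*k}$, using $\eta(t)^2 = 1$ for $t \neq 0$ and multiplicativity of $\eta$. Since $\delta$ is a nonsquare, $\eta(\delta) = -1$; and $\eta(-1) = \eta(-1)$, so $\eta(-\delta) = -\eta(-1)$. Also $p^{*k} = \big((-1)^{(p-1)/2}p\big)^k = \eta(-1)^k q$ (interpreting $\eta(-1)=(-1)^{(p-1)/2}$ on the prime field, consistently with $q=p^k$). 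Combining, the coefficient of $\zeta_p^{Tr_1^k(-tc)}$ becomes $\frac{1}{q}\cdot(-\eta(-1))\cdot \eta(-1)^k q = -\eta(-1)^{k+1}$, which I would check equals $-1$ regardless of parity — indeed $\eta(\delta)=-1$ forces the product $\eta(-\delta)p^{*k}$ to equal $-q$ exactly, since that product is the value of a genuine quadratic Gauss sum squared and must be $\pm q$, with sign determined by $\eta$ of the discriminant $-\delta$ of the form $x^2-\delta y^2$. Thus
$$
N = q - \frac{1}{q}\cdot q \sum_{t \in \mathbb{F}_q^\times} \zeta_p^{Tr_1^k(-tc)} = q - \sum_{t \in \mathbb{F}_q^\times}\zeta_p^{Tr_1^k(-tc)}.
$$
Since $c \in \mathbb{F}_q^\times$, the map $t \mapsto -tc$ is a bijection of $\mathbb{F}_q^\times$, so $\sum_{t \in \mathbb{F}_q^\times}\zeta_p^{Tr_1^k(-tc)} = \sum_{s \in \mathbb{F}_q^\times}\zeta_p^{Tr_1^k(s)} = -1$ by the standard additive character identity $\sum_{s \in \mathbb{F}_q}\zeta_p^{Tr_1^k(s)} = 0$. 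Hence $N = q - (-1) = q+1$, as claimed.

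The main obstacle is bookkeeping the sign: one must pin down $\eta(-\delta)\, p^{*k} = -q$ unambiguously, making sure the $(-1)^{k-1}$ prefactors, the definition $p^* = (-1)^{(p-1)/2}p$, and the quadratic character $\eta$ on $\mathbb{F}_q$ (as opposed to $\eta_0$ on $\mathbb{F}_p$) all interact correctly for both even and odd $k$. An alternative that sidesteps this entirely: note that $\mathcal{N} : \mathbb{F}_{q^2}^\times \to \mathbb{F}_q^\times$ is a surjective group homomorphism with kernel of size $(q^2-1)/(q-1) = q+1$, so every fiber $\mathcal{N}^{-1}(c)$ has exactly $q+1$ elements; identifying $z = x + \sqrt{\delta}\,y$ with $(x,y)$ shows $\mathcal{N}(z) = x^2 - \delta y^2$, giving the result immediately. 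I would likely present the homomorphism argument as the clean proof and relegate the character-sum computation to a remark, since the latter is the technique needed for the harder subset-sum estimates later in Section \ref{sec:subsets}.
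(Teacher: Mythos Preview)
Your proposal is correct, and the alternative you single out at the end---the norm homomorphism argument---is exactly what the paper does, in one line: since $\mathcal{N}:\mathbb{F}_{q^2}^\times\to\mathbb{F}_q^\times$ is a surjective homomorphism, every fiber has size $(q^2-1)/(q-1)=q+1$, and identifying $z=x+\sqrt{\delta}\,y$ with $(x,y)$ finishes it. So your own recommendation (``present the homomorphism argument as the clean proof'') matches the paper precisely.

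The character-sum computation you work through first is a genuinely different route and also valid, but superfluous for this lemma. Your sign bookkeeping there is slightly tangled because you drift between $\eta$ on $\mathbb{F}_q$ and $\eta_0$ on $\mathbb{F}_p$; the clean way is to use $\eta(-1)=\eta_0(-1)^k$ (which the paper invokes later in the proof of Lemma~\ref{UF}), giving $\eta(-\delta)\,p^{*k}=\eta(-1)\eta(\delta)\cdot\eta_0(-1)^k q=\eta(\delta)\,q=-q$ directly. The paper does reserve exactly this style of Gaussian-sum manipulation for Lemma~\ref{UF}, where it is actually needed, so your instinct that the technique belongs elsewhere is on target.
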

\begin{proof}
Note that $x^2-\delta y^2
=c$ if and only if $\mathcal{N}(
x+\sqrt{\delta}y)=c$. Hence, this lemma follows.
\end{proof}

\begin{lemma}\label{D_c}
Let $c\in \mathbb{F}_q^{\times}$.
Define the set $D_c=\{x\in \mathbb{F}_q:
x^2=c ~\text{or}~ x^2-c\in \textrm{NSQ}_q \}$.
Then $\#D_c=
\left\{
  \begin{array}{ll}
    \frac{q+3}{2}, &  c\in \textrm{SQ}_q \\
    \frac{q+1}{2}, &  c\in \textrm{NSQ}_q.
  \end{array}
\right.
$
\end{lemma}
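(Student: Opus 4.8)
The plan is to count the complement of $D_c$ via the quadratic character $\eta$ (with the convention $\eta(0)=0$). An element $x\in\mathbb{F}_q$ fails to lie in $D_c$ exactly when $x^2-c$ is a \emph{nonzero} square, so writing $N$ for the number of such $x$ we have $\#D_c=q-N$. For $x$ with $x^2\neq c$, the quantity $\frac{1+\eta(x^2-c)}{2}$ is the indicator of ``$x^2-c$ is a nonzero square'', while the $x$ with $x^2=c$ (there are $1+\eta(c)$ of them) contribute nothing to $N$; hence
$$
N=\frac12\sum_{x\in\mathbb{F}_q,\,x^2\neq c}\big(1+\eta(x^2-c)\big)=\frac12\Big(q-1-\eta(c)+\sum_{x\in\mathbb{F}_q}\eta(x^2-c)\Big),
$$
where the last step uses that the omitted terms have $\eta(x^2-c)=\eta(0)=0$.

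It then remains to evaluate $S:=\sum_{x\in\mathbb{F}_q}\eta(x^2-c)$, which I would do by a direct point count on the conic $x^2-y^2=c$. Since $q$ is odd, the map $(x,y)\mapsto(x-y,\,x+y)$ is a bijection of $\mathbb{F}_q^2$ carrying this conic onto the hyperbola $uv=c$, which has exactly $q-1$ points because $c\in\mathbb{F}_q^{\times}$. Counting the same conic fibrewise over $x$ gives $\sum_{x}\#\{y:y^2=x^2-c\}=\sum_{x}\big(1+\eta(x^2-c)\big)=q+S$. Comparing the two counts yields $S=-1$. (Equivalently, one may invoke the standard evaluation of the character sum of a quadratic polynomial of nonzero discriminant, here $4c\neq0$.)

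Substituting $S=-1$ gives $N=\tfrac12(q-2-\eta(c))$, hence $\#D_c=q-N=\frac{q+2+\eta(c)}{2}$. Splitting into cases finishes the proof: if $c\in\textrm{SQ}_q$ then $\eta(c)=1$ and $\#D_c=\frac{q+3}{2}$, while if $c\in\textrm{NSQ}_q$ then $\eta(c)=-1$ and $\#D_c=\frac{q+1}{2}$. There is no genuine obstacle here; the only point needing a little care is bookkeeping the $x$ with $x^2=c$, which lie in $D_c$ by definition yet are invisible to the character sum since $\eta(0)=0$, while everything else is a routine manipulation of Legendre-symbol sums.
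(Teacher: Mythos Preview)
Your proof is correct. The argument via the quadratic character and the evaluation $S=\sum_{x}\eta(x^2-c)=-1$ obtained from the split conic $x^2-y^2=c\ \leftrightarrow\ uv=c$ is valid, and the bookkeeping of the $1+\eta(c)$ roots of $x^2=c$ is handled properly.

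Your route differs from the paper's. The paper notices that, since $\delta\in\mathrm{NSQ}_q$, one has $D_c=\{x:\exists\,y,\ x^2-\delta y^2=c\}$, i.e.\ $D_c$ is the image of the anisotropic conic $x^2-\delta y^2=c$ under projection to the $x$-coordinate. Lemma~\ref{c=q+1} (via the norm map $\mathbb{F}_{q^2}^\times\to\mathbb{F}_q^\times$) gives $q+1$ points on this conic, and one then reads off $\#D_c$ by separating the fibers with $y=0$ from those with $y\neq0$. So the paper uses the \emph{nonsplit} quadratic form $x^2-\delta y^2$ and the earlier lemma, whereas you use the \emph{split} form $x^2-y^2$ and a direct character computation. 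Your approach is self-contained (it does not invoke Lemma~\ref{c=q+1}) and gives the unified formula $\#D_c=\tfrac{q+2+\eta(c)}{2}$ in one stroke; the paper's approach is more geometric and ties the lemma directly to the curve $x^2-\delta y^2=c$ that drives the rest of Section~\ref{sec:subsets}.
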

\begin{proof}
Suppose $c\in \textrm{SQ}_q$. From the definition
$D_c$, we have
$$
D_c=\{x\in \mathbb{F}_q: x^2-\delta y^2
=c\}.
$$
Let $\#D_c=s$ and
$D_c=\{\sqrt{c}, -\sqrt{c}, x_1,\cdots, x_{s-2}
\}$. Then all the points  on
$x^2-\delta y^2=c$ are
$$
(\sqrt{c},0),
(-\sqrt{c},0), (x_1,\pm y_1),
\cdots, (x_{s-2},\pm y_{s-2}),
$$
where $y_i\in \mathbb{F}_q^{\times}$ and $x_i^2-\delta y_i^2=c$. Hence, we have
$$
\#\{(x,y): x^2-\delta y^2=c
\}=2(s-2)+2=2s-2.
$$
From Lemma \ref{c=q+1},
$2s-2=q+1$ and $s=\frac{q+3}{2}$.

Suppose $c\in \textrm{NSQ}_q$.
Let $D_c=s$ and $D_c=
\{x_1,\cdots, x_s\}$.
Then all the points on
$x^2-\delta y^2=c$ are
$$
(x_1,\pm y_1), \cdots, (x_s,\pm y_s),
$$
where $y_i\in \mathbb{F}_q^{\times}$ and $x_i^2-\delta y_i^2=c$.
Hence,
$\#\{(x,y): x^2-\delta y^2=c\}=2s$.
From Lemma \ref{c=q+1},
we have $2s=q+1$ and $s=\frac{q+1}{2}$.
\end{proof}

\begin{lemma}\label{Iw}
Let $w\in \mathbb{F}_{q^2}^{\times}$, $c=\mathcal{N}(w)$, and
$I_w=\{\mathcal{N}(z): z\in H_{+}
+H_{+}w
\}$. Then\\
{\rm (1)}
$\#I_w=
\left\{
  \begin{array}{ll}
    \frac{q+3}{2}, & c\in \textrm{SQ}_q \\
    \frac{q+1}{2}, & c\in \textrm{NSQ}_q
  \end{array}
\right.
$.\\
{\rm (2)} $1\in I_1$ if and only if
$-3\in \textrm{NSQ}_q$ or $p=3$.
\end{lemma}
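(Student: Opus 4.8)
The plan is to reduce the computation of $I_w$ to the set $D_c$ of Lemma~\ref{D_c} via the norm form. First I would expand a generic element of $H_{+}+H_{+}w$: writing it as $u+vw$ with $u,v\in H_{+}$ and using $\mathcal{N}(z)=z\overline{z}$ together with $\mathcal{N}(u)=\mathcal{N}(v)=1$, one gets
$$
\mathcal{N}(u+vw)=\mathcal{N}(u)+u\overline{v}\,\overline{w}+\overline{u}vw+\mathcal{N}(v)\mathcal{N}(w)=1+c+\big(u\overline{v}\,\overline{w}+\overline{u\overline{v}\,\overline{w}}\big).
$$
Setting $\mathrm{Tr}(\alpha):=\alpha+\overline{\alpha}$ for the relative trace of $\mathbb{F}_{q^2}/\mathbb{F}_q$, this reads $\mathcal{N}(u+vw)=1+c+\mathrm{Tr}(u\overline{v}\,\overline{w})$. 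The key observation is that $t:=u\overline{v}$ satisfies $\mathcal{N}(t)=\mathcal{N}(u)\mathcal{N}(v)=1$, so $t\in H_{+}$, and conversely every $t\in H_{+}$ arises (take $v=1$); hence $\{u\overline{v}\,\overline{w}:u,v\in H_{+}\}=H_{+}\overline{w}$. Since $\mathcal{N}:\mathbb{F}_{q^2}^{\times}\to\mathbb{F}_q^{\times}$ is a surjective homomorphism with kernel $H_{+}$, the coset $H_{+}\overline{w}$ is exactly $\{s\in\mathbb{F}_{q^2}:\mathcal{N}(s)=c\}$. Therefore $I_w=\{1+c+\mathrm{Tr}(s):\mathcal{N}(s)=c\}$.

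Next I would pass to $\mathbb{F}_q$-coordinates. Writing $s=x+\sqrt{\delta}y$ gives $\mathcal{N}(s)=x^2-\delta y^2$ and $\mathrm{Tr}(s)=2x$, so the set of admissible first coordinates is $\{x\in\mathbb{F}_q:x^2-\delta y^2=c\text{ for some }y\in\mathbb{F}_q\}$, which equals $\{x\in\mathbb{F}_q:x^2=c\text{ or }x^2-c\in\mathrm{NSQ}_q\}=D_c$, because $\delta$ is a nonsquare and hence $x^2-c\in\delta(\mathbb{F}_q)^2$ precisely when $x^2-c$ is $0$ or a nonsquare. Thus $I_w=\{1+c+2x:x\in D_c\}$. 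As $p$ is odd, the affine map $x\mapsto 1+c+2x$ is a bijection of $\mathbb{F}_q$, so $\#I_w=\#D_c$, and part~(1) follows at once from Lemma~\ref{D_c}.

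For part~(2) I specialize to $w=1$, so $c=1$ and $I_1=\{2+2x:x\in D_1\}$. Then $1\in I_1$ if and only if $2+2x=1$ for some $x\in D_1$, i.e.\ $x=-2^{-1}\in D_1$. Now $(-2^{-1})^2=4^{-1}$, and $4^{-1}=1$ exactly when $3=0$ in $\mathbb{F}_q$, that is $p=3$; when $p\neq 3$, membership $-2^{-1}\in D_1$ reduces to $4^{-1}-1=-3\cdot 4^{-1}\in\mathrm{NSQ}_q$, which, $4$ being a square, is equivalent to $-3\in\mathrm{NSQ}_q$. Hence $1\in I_1$ if and only if $p=3$ or $-3\in\mathrm{NSQ}_q$, as claimed.

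I do not expect a serious obstacle here: the proof is essentially the norm-form identity followed by a clean reduction to $D_c$. The only point needing a little care is the double use of the surjectivity-and-kernel structure of $\mathcal{N}$ to identify $\{u\overline{v}\,\overline{w}:u,v\in H_{+}\}$ with the full norm-$c$ fibre; the remaining steps are bookkeeping, and oddness of $p$ enters only through the invertibility of $2$ and the equivalence $p=3\Leftrightarrow 4^{-1}=1$.
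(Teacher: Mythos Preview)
Your proof is correct and follows essentially the same route as the paper: both reduce $I_w$ to $\{1+c+2x:x\in D_c\}$ by passing to the norm-$c$ fibre (the paper via the factorisation $\mathcal{N}(z_1+z_2w)=\mathcal{N}(1+(z_2/z_1)w)$, you via the trace expansion $\mathcal{N}(u+vw)=1+c+\mathrm{Tr}(u\overline{v}\,\overline{w})$, which unwind to the same identity), and then invoke Lemma~\ref{D_c}. Your treatment of part~(2) is likewise the same computation, just phrased with $x=-2^{-1}$ instead of the paper's $\frac{t-c-1}{2}=-\tfrac12$.
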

\begin{proof}
\rm(1) Let $z_1,z_2\in H_{+}$. Then
$\mathcal{N}(z_1+z_2 w)=\mathcal{N}(
z_1(1+\frac{z_2}{z_1} w))
=\mathcal{N}(z_1)\mathcal{N}(1+
\frac{z_2}{z_1}w)$.
From the definition of $H_+$, we have
$$
I_w=\{\mathcal{N}(1+z): z\in H_{+}w\}.
$$
From the definition of $\mathcal{N}(\cdot)$,
$t\in I_w$ if and only if the following
system of equations has solutions:
$$
\left\{
  \begin{array}{l}
    x^2-\delta y^2=c  \\
     (1+x)^2-\delta y^2=t.
  \end{array}
\right.
$$
This system of equations is equivalent
to
$$
\left\{
  \begin{array}{l}
    x^2-\delta y^2=c  \\
     x=\frac{t-c-1}{2}.
  \end{array}
\right.
$$
Hence, $t\in I_w$ if and only if
the equation
\begin{equation}\label{-3 in NSQ}
\delta y^2=
\frac{(t-c-1)^2}{4}-c,
\end{equation}
with the variable
$y$ has solutions, that is, $
\frac{t-c-1}{2}\in D_c$.
From Lemma \ref{D_c}, we have
$$
\#I_w=
\left\{
  \begin{array}{ll}
    \frac{q+3}{2}, & c\in \textrm{SQ}_q \\
    \frac{q+1}{2}, & c\in \textrm{NSQ}_q.
  \end{array}
\right.
$$
\rm(2) From the Equation
(\ref{-3 in NSQ}), we have
$1\in I_1$ if and only if, $-\frac{3}{4}
\in \textrm{NSQ}_q$ or $-3=0$. Hence, $1\in I_1$ if and only if, $-3\in
\textrm{NSQ}_q$ or $p=3$.
\end{proof}

\begin{lemma}\label{HHw}
Let $w\in \mathbb{F}_{q^2}^{\times}$, where $w\not\in H_{+}$.
Then
$$
\#(H_{+}+H_{+}w)
=
\left\{
  \begin{array}{ll}
    \frac{(q+1)(q+3)}{2}, & \mathcal{N}(w)
\in \textrm{SQ}_q  \\
    \frac{(q+1)^2}{2}, & \mathcal{N}(w)
\in \textrm{NSQ}_q
  \end{array}
\right.
.
$$
\end{lemma}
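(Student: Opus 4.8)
The plan is to exploit the fact that $H_{+}$ is not merely a subset but the \emph{subgroup} of $\mathbb{F}_{q^2}^{\times}$ consisting of the norm-one elements (the kernel of $\mathcal{N}$), of order $q+1$. This lets me rewrite $H_{+}+H_{+}w$ as a product set $H_{+}\cdot S$, which is automatically a union of cosets of $H_{+}$; the cardinality then reduces to counting those cosets, and that count turns out to be exactly the quantity $\#I_w$ already evaluated in Lemma~\ref{Iw}.

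First I would factor out. For $z_1,z_2\in H_{+}$, since $z_1\ne 0$, write $z_1+z_2w=z_1\!\left(1+\frac{z_2}{z_1}w\right)$. Because $H_{+}$ is a group, the map $(z_1,z_2)\mapsto(z_1,z_2z_1^{-1})$ is a bijection of $H_{+}\times H_{+}$, so
$$
H_{+}+H_{+}w=\{\,z_1(1+uw):z_1\in H_{+},\ u\in H_{+}\,\}=H_{+}\cdot S,\qquad S:=\{\,1+uw:u\in H_{+}\,\}.
$$
The map $u\mapsto 1+uw$ is injective since $w\ne 0$, hence $\#S=q+1$. Moreover $0\in S$ would mean $1+uw=0$ for some $u\in H_{+}$, i.e. $-w^{-1}\in H_{+}$; since $\mathcal{N}(-w^{-1})=\mathcal{N}(w)^{-1}$ (as $\mathcal{N}(-1)=1$), this happens iff $\mathcal{N}(w)=1$, i.e. $w\in H_{+}$, contrary to hypothesis. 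Thus $S\subseteq\mathbb{F}_{q^2}^{\times}$. This is the one place where $w\notin H_{+}$ is used.

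Next I would count the cosets. Since $H_{+}$ is a subgroup of $\mathbb{F}_{q^2}^{\times}$ of order $q+1$, the product set $H_{+}\cdot S=\bigcup_{s\in S}sH_{+}$ is a disjoint union of cosets, each of size $q+1$, and $sH_{+}=s'H_{+}$ if and only if $s/s'\in H_{+}$, i.e. if and only if $\mathcal{N}(s)=\mathcal{N}(s')$. Hence the number of distinct cosets occurring equals $\#\{\mathcal{N}(s):s\in S\}=\#\mathcal{N}(S)$, so that $\#(H_{+}+H_{+}w)=(q+1)\cdot\#\mathcal{N}(S)$. Finally $\mathcal{N}(S)=\{\mathcal{N}(1+uw):u\in H_{+}\}$ is precisely the set $I_w$ of Lemma~\ref{Iw} (indeed $\mathcal{N}(H_{+}\cdot S)=\mathcal{N}(S)$ since $\mathcal{N}$ is trivial on $H_{+}$). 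Substituting $\#I_w=\frac{q+3}{2}$ when $\mathcal{N}(w)\in\textrm{SQ}_q$ and $\#I_w=\frac{q+1}{2}$ when $\mathcal{N}(w)\in\textrm{NSQ}_q$ from Lemma~\ref{Iw}(1) yields the two claimed cardinalities $\frac{(q+1)(q+3)}{2}$ and $\frac{(q+1)^2}{2}$.

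There is no serious obstacle here: the argument is a short structural reduction once one notices that $H_{+}$ is a multiplicative group. The only points demanding care are the verification that $0\notin S$ (which is exactly where the hypothesis $w\notin H_{+}$ enters) and the bookkeeping that the collapsing of $H_{+}\cdot S$ into cosets is governed precisely by the fibers of $\mathcal{N}$ restricted to $S$, so that the count telescopes down to $\#I_w$ with no further computation.
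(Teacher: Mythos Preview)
Your proof is correct and follows essentially the same approach as the paper's: the paper's two-line argument (``Since $w\notin H_{+}$, $0\notin H_{+}+H_{+}w$. From Lemma~\ref{Iw}, this lemma follows.'') tacitly relies on the same coset decomposition $H_{+}+H_{+}w=\bigcup_{t\in I_w}\{z:\mathcal{N}(z)=t\}$ that you make explicit via the factorization $H_{+}\cdot S$. You have simply unpacked the two nontrivial points the paper leaves implicit, namely that $w\notin H_{+}$ is exactly what keeps $0$ out of the set, and that the fibers of $\mathcal{N}$ over $I_w$ each have size $q+1$.
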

\begin{proof}
Since $w\not\in H_{+}$,
$0\not\in H_{+}+H_{+}w$.
From Lemma \ref{Iw}, this lemma follows.
\end{proof}

\begin{theorem}\label{2H+}
Let $p$ be an odd prime. Then

{\rm (1)} $\#H_{+}^{(2)}=1+\frac{1}{2}
(q+1)^2$.

{\rm (2)} $\#H_{+}^{(2)}\backslash (H_{+}\cup \{0\})
=
\left\{
   \begin{array}{ll}
     \frac{(q+1)^2}{2}, & \hbox{$-3\in
\textrm{SQ}_q$} \\
    \frac{(q-1)(q+1)}{2}, & \hbox{
$-3\in \textrm{NSQ}_q \text{~or~} p=3$}
   \end{array}
 \right.
 .
$
\end{theorem}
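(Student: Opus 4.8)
The plan is to compute $\#H_{+}^{(2)} = \#(H_{+}+H_{+})$ by a counting argument over the norm map, and then to separate out the contributions of $0$ and of $H_{+}$ itself for part (2). First I would observe that $H_{+}+H_{+}$ is invariant under multiplication by any element of $H_{+}$ (since $H_{+}$ is a group under multiplication in $\mathbb{F}_{q^2}^{\times}$), so it decomposes into unions of $H_{+}$-cosets together possibly with $\{0\}$. The key is to count, for each $w \in \mathbb{F}_{q^2}^{\times}$, how many $H_{+}$-cosets lie in $H_{+}+H_{+}$; equivalently, to understand $\#(H_{+}+H_{+}w)$ as $w$ ranges over coset representatives. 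Lemma \ref{HHw} already gives $\#(H_{+}+H_{+}w)$ for $w \notin H_{+}$ in terms of whether $\mathcal{N}(w) \in \textrm{SQ}_q$, and the case $w \in H_{+}$ gives $H_{+}+H_{+}H_{+} = H_{+}+H_{+}$ which is the whole set, plus the observation that $0 \in H_{+}+H_{+}$ precisely when $-1 \in H_{+}$, i.e. when $-1$ is a norm, which always holds (take $z$ with $\mathcal{N}(z)=1$ and note $\mathcal{N}(-z)=1$, so $z + (-z) = 0$ with both in $H_{+}$; thus $0 \in H_{+}^{(2)}$ always).

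For part (1), I would count $\#H_{+}^{(2)}$ by partitioning $\mathbb{F}_{q^2}^{\times}$ according to the value of the norm: there are $\frac{q-1}{2}$ nonzero square norm values and $\frac{q-1}{2}$ nonsquare norm values in $\mathbb{F}_q^{\times}$, each fiber of $\mathcal{N}$ having size $q+1$. The cleanest route is to count pairs: $\#H_{+}^{(2)} = 1 + \sum_{c}(\text{number of elements of } H_{+}+H_{+} \text{ with norm } c)$, where the $1$ accounts for $0$, and the sum runs over $c \in \mathbb{F}_q^{\times}$. For fixed $c$, an element of norm $c$ in $H_{+}+H_{+}$ is an element $z_1 + z_2$ with $\mathcal{N}(z_1)=\mathcal{N}(z_2)=1$ and $\mathcal{N}(z_1+z_2)=c$; after dividing by $z_1$ this reduces (as in Lemma \ref{Iw}) to counting $z \in H_{+}$ with $\mathcal{N}(1+z)=c$. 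Running the same system-of-equations reduction as in Lemma \ref{Iw} with $c=1$ replaced by the relevant norm, one finds the count of elements of each fixed norm $c$, and summing over all $c \in \mathbb{F}_q^{\times}$ (using $\#\mathbb{F}_q^{\times} = q-1$) should telescope to $\frac{1}{2}(q+1)^2$. An alternative and perhaps faster route: $H_{+}+H_{+}$ consists of $0$ together with exactly those cosets $\gamma H_{+}$ such that $\mathcal{N}(\gamma) \in I_1$ where $I_1 = \{\mathcal{N}(1+z): z \in H_{+}\}$; by Lemma \ref{Iw}(1) with $w=1$ (so $c=1 \in \textrm{SQ}_q$) we have $\#I_1 = \frac{q+3}{2}$, but one must be careful that $1 \in I_1$ sometimes and the coset structure near $z=-1$ (which gives $0$) needs separate accounting. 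I would then get $\#H_{+}^{(2)} = 1 + \#\{\text{cosets}\} \cdot (q+1)$ and check this equals $1 + \frac{1}{2}(q+1)^2$, which forces the number of contributing cosets to be $\frac{q+1}{2}$.

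For part (2), I would subtract off $\#(H_{+} \cup \{0\}) = q+2$ from the answer in part (1), but with a twist: whether $H_{+} \subseteq H_{+}^{(2)}$ is automatic (it is, since $H_{+} = H_{+} + 0 \subseteq H_{+}+H_{+}$ because $0 \in H_{+}+H_{+}$... wait, $0 \notin H_{+}$, so more carefully $H_{+} \subseteq H_{+}^{(2)}$ iff each $z \in H_{+}$ can be written as a sum of two elements of $H_{+}$, which by the coset argument happens iff $1 \in I_1$). By Lemma \ref{Iw}(2), $1 \in I_1$ iff $-3 \in \textrm{NSQ}_q$ or $p=3$. So the two cases split exactly as: when $-3 \in \textrm{SQ}_q$ (and $p \neq 3$), the coset $H_{+}$ itself is \emph{not} contained in $H_{+}^{(2)}$, so $H_{+}^{(2)}\backslash(H_{+}\cup\{0\})$ has size $\#H_{+}^{(2)} - 1 = \frac{(q+1)^2}{2}$; when $-3 \in \textrm{NSQ}_q$ or $p=3$, the coset $H_{+}$ \emph{is} contained, so we remove a further $q+1$ elements, giving $\frac{(q+1)^2}{2} - (q+1) = \frac{(q-1)(q+1)}{2}$. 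I expect the main obstacle to be the bookkeeping in part (1): correctly identifying which $H_{+}$-cosets appear in $H_{+}+H_{+}$, handling the exceptional coset(s) where the reduction $z_1+z_2 \mapsto 1 + z_2/z_1$ degenerates (namely near $z_2/z_1 = -1$ producing the element $0$), and verifying that the per-norm counts sum to exactly $\frac{1}{2}(q+1)^2$ rather than being off by a lower-order term — this requires care about the distinction between counting coset representatives weighted by fiber size versus counting set elements directly.
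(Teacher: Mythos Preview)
Your ``alternative route'' --- decomposing $H_{+}^{(2)}$ into $\{0\}$ together with the $H_{+}$-cosets indexed by the norm values in $I_1\setminus\{0\}$, then invoking Lemma~\ref{Iw}(1) for part~(1) and Lemma~\ref{Iw}(2) for part~(2) --- is exactly the paper's proof. The bookkeeping you flag resolves cleanly: $0\in I_1$ always (since $-1\in H_{+}$ gives $\mathcal{N}(1+(-1))=0$), so the number of contributing cosets is $\#I_1-1=\tfrac{q+3}{2}-1=\tfrac{q+1}{2}$, giving $\#H_{+}^{(2)}=1+(q+1)\cdot\tfrac{q+1}{2}$ directly rather than by back-solving from the target answer; your part~(2) then matches the paper verbatim.
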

\begin{proof}
{\rm (1)} For any $z_1 \in H_{+}$ and $z_2 \in H_{+}^{(2)}$, we have $z_1z_2\in H_{+}^{(2)}$.
Thus, one has the following decomposition
$$H_{+}^{(2)}=\cup_{t\in I_1 \setminus
\{0\}} H_{+}t \cup \{0\}.$$
By Lemma \ref{Iw},
$$
\#H_{+}^{(2)}=1+(q+1)
\frac{q+1}{2}=1+\frac{1}{2}
(q+1)^2.
$$

{\rm (2)} If $-3\in \textrm{NSQ}_q$ or p=3,
from Lemma \ref{Iw}, $1\in I_1$. Then
$H_{+}\cdot 1\subseteq H_{+}^{(2)}$. Hence,
$\#(H_{+}^{(2)}\setminus
(H_{+}\cup \{0\}))=1+\frac{1}{2}
(q+1)^2-(q+1)-1
$, that is, $\#(H_{+}^{(2)}\setminus
(H_{+}\cup \{0\}))=\frac{(q+1)(q-1)}{2}$.

If $-3\in \textrm{SQ}_q$, from from Lemma \ref{Iw}, $1\not\in I_1$. From the statement
(1) of this lemma, we have
$\#(H_{+}^{(2)}\setminus
(H_{+}\cup \{0\}))=\frac{1}{2}(q+1)^2$.
\end{proof}

\begin{theorem}\label{3H+}
Let $p$ be an odd prime. Then $H_{+}^{(3)}\cup \{0\}
=\mathbb{F}_{q^2}$.
\end{theorem}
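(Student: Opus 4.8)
The plan is to avoid exponential sums altogether and reduce the statement to a one-line pigeonhole count \emph{inside the small field} $\mathbb{F}_q$, via the norm map and the multiplicative orbit structure of $H_+^{(2)}$ that already appeared in the proof of Theorem \ref{2H+}. First I would record that $H_+^{(2)}$ is the full preimage under $\mathcal{N}$ of a subset of $\mathbb{F}_q$: since $H_+\cdot H_+^{(2)}\subseteq H_+^{(2)}$, the set $H_+^{(2)}$ is a union of $H_+$-orbits (for multiplication), and for $v\neq 0$ the orbit of $v$ is exactly the set of all elements of norm $\mathcal{N}(v)$ — a set of size $q+1$ by Lemma \ref{c=q+1} — while $\{0\}$ is its own orbit. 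Putting $I_1:=\mathcal{N}(H_+^{(2)})$ and noting $0\in I_1$ (because $0=1+(-1)\in H_+^{(2)}$), this gives
$$
H_+^{(2)}=\{v\in\mathbb{F}_{q^2}:\mathcal{N}(v)\in I_1\}.
$$
Moreover $I_1$ is the set $I_w$ of Lemma \ref{Iw} for $w=1$, and $\mathcal{N}(1)=1\in\mathrm{SQ}_q$, so Lemma \ref{Iw}(1) yields $\#I_1=\tfrac{q+3}{2}$.

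Now fix $w\in\mathbb{F}_{q^2}^{\times}$; the goal is $w\in H_+^{(3)}=H_++H_+^{(2)}$. By the displayed description of $H_+^{(2)}$, this is equivalent to the existence of $z\in H_+$ with $\mathcal{N}(w-z)\in I_1$, i.e.\ to $J_w\cap I_1\neq\emptyset$, where $J_w:=\{\mathcal{N}(w-z):z\in H_+\}\subseteq\mathbb{F}_q$. To bound $\#J_w$ from below, use multiplicativity of $\mathcal{N}$: since $w\neq0$, $\mathcal{N}(w-z)=\mathcal{N}(w)\,\mathcal{N}(1-z/w)$, and as $z$ runs over $H_+$ the element $-z/w$ runs over the circle of norm $\mathcal{N}(w)^{-1}$; hence $J_w=\mathcal{N}(w)\cdot I_{w_0}$ for any $w_0\in\mathbb{F}_{q^2}^{\times}$ with $\mathcal{N}(w_0)=\mathcal{N}(w)^{-1}$, using the description $I_{w_0}=\{\mathcal{N}(1+y):\mathcal{N}(y)=\mathcal{N}(w_0)\}$ from the proof of Lemma \ref{Iw}. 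Since multiplication by $\mathcal{N}(w)\in\mathbb{F}_q^{\times}$ is a bijection of $\mathbb{F}_q$, we get $\#J_w=\#I_{w_0}\geq\tfrac{q+1}{2}$ by Lemma \ref{Iw}(1).

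Finally, $J_w$ and $I_1$ are subsets of $\mathbb{F}_q$ with $\#J_w+\#I_1\geq\tfrac{q+1}{2}+\tfrac{q+3}{2}=q+2>q=\#\mathbb{F}_q$, so they must intersect; hence $w\in H_+^{(3)}$. As $w\in\mathbb{F}_{q^2}^{\times}$ was arbitrary, $H_+^{(3)}\supseteq\mathbb{F}_{q^2}^{\times}$, and adjoining $0$ gives $H_+^{(3)}\cup\{0\}=\mathbb{F}_{q^2}$. The only genuinely delicate points are the structural identity $H_+^{(2)}=\mathcal{N}^{-1}(I_1)$ in the first step — in particular that $H_+^{(2)}$ is a union of \emph{full} norm circles, and the correct treatment of $0$ — and the legitimacy of the substitution $z\mapsto-z/w$ in the computation of $J_w$; everything else is routine bookkeeping. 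I do not expect a serious obstacle, but it is worth noting that a naive pigeonhole carried out directly in $\mathbb{F}_{q^2}$ (comparing $\#H_+^{(2)}+\#(w-H_+)$ with $q^2$) succeeds only for $q<5$, which is exactly why one has to descend to norms in $\mathbb{F}_q$.
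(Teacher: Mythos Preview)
Your proof is correct and is essentially the same pigeonhole argument as the paper's, just carried out one level down: the paper assumes $w\notin H_+^{(3)}$, shows $(H_++H_+w)\cap H_+^{(2)}=\emptyset$, and obtains the contradiction $\#(H_++H_+w)+\#H_+^{(2)}>q^2$ via Lemma~\ref{HHw} and Theorem~\ref{2H+}; since both of those sets are unions of full norm-circles of size $q+1$, dividing that inequality through by $q+1$ is precisely your inequality $\#J_w+\#I_1>q$ in $\mathbb{F}_q$ (indeed, factoring $w-z=(-z)(1-w/z)$ shows $J_w=\{\mathcal{N}(1+uw):u\in H_+\}=I_w$, so you did not even need the detour through $I_{w_0}$). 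The only cosmetic differences are that you argue directly rather than by contradiction and work in $\mathbb{F}_q$ rather than $\mathbb{F}_{q^2}$.
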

\begin{proof}
Suppose that there exists
$z\in \mathbb{F}_{q^2}^{\times}$, with
$z\not\in H_{+}^{(3)}$. Then
$H_{+}z\cap H_{+}^{(3)}=\emptyset$. With the assumption, we first prove that $
(H_{+}+H_{+}z)\cap (H_{+}+H_{+})=\emptyset$. If
there exist $u_i\in H_{+}(i=1,2,3,4)$
such that
$$
u_1+u_2z=u_3+u_4,
$$
then $z=\frac{u_3}{u_2}
+\frac{u_4}{u_2}+(
-\frac{u_1}{u_2})\in H_{+}^{(3)}$.
This contradicts $z\not\in H_{+}^{(3)}$.
Hence, $(H_{+}+H_{+}z)\cap (H_{+}+H_{+})=\emptyset$.
Then $q^2=
\#\mathbb{F}_{q^2}\geq \#(H_{+}
+H_{+}z)+\#(H_{+}+H_{+})$.
From Lemma
\ref{HHw} and Theorem \ref{2H+} , we have
$$
q^2\geq \frac{(q+1)^2}{2}
+1+\frac{1}{2}(q+1)^2=
1+(q+1)^2.
$$
This makes a contradiction. Hence, this theorem follows.
\end{proof}

\begin{theorem}\label{subset H+}
Let $p$ be an odd prime with $p\equiv 1 \text{ or } -5 \mod 12$ and $k$ be any positive integer, or $p\ge 5$ and $k$ be an even positive integer. Let $H_{+}$ be defined as (\ref{H+}). Then, the expansion critical index and the expansion limit index of subset $H_{+}$ are equal 2 and 3, respectively.
\end{theorem}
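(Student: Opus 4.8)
The plan is to apply Theorem \ref{2 and 3} with $\Gamma = \mathbb{F}_{q^2}$ (viewed as an Abelian group of exponent $p$), $H = H_+$, and $n = \#H_+/2 = \frac{q+1}{2}$. For this we must verify the three hypotheses of that theorem: (i) the sandwich inequality $2n^2 + 2n + 1 < \#\Gamma < \frac{1}{3}(1+2n)(3+2n+2n^2)$; (ii) $\#\left(\bigcup_{i=0}^{2} H_+^{(i)}\right) = 2n^2 + 2n + 1$; and (iii) $\bigcup_{i=0}^{3} H_+^{(i)} = \Gamma$. Note first that the hypotheses on $p$ and $k$ — namely $p \equiv 1, -5 \bmod 12$ with $k$ arbitrary, or $k$ even — are exactly the conditions under which $-3 \in \textrm{SQ}_q$: by Proposition~1(3), $-3 \in \textrm{SQ}_p$ iff $p \equiv 1, -5 \bmod 12$, and if $k$ is even then every element of $\mathbb{F}_p$ (in particular $-3$) is a square in $\mathbb{F}_q$ since $\mathbb{F}_q \supseteq \mathbb{F}_{p^2}$. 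So throughout the proof we are in the case $-3 \in \textrm{SQ}_q$.

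\textbf{Step 1 (the counting identities).} I would compute $\bigcup_{i=0}^{2} H_+^{(i)}$ as $\{0\} \cup H_+ \cup H_+^{(2)}$, but note $\{0\} \subseteq H_+^{(2)}$ by the decomposition in the proof of Theorem \ref{2H+}(1), and $H_+ \cap H_+^{(2)} = \emptyset$ precisely because $1 \notin I_1$ when $-3 \in \textrm{SQ}_q$ (Lemma \ref{Iw}(2)). Hence $\bigcup_{i=0}^{2} H_+^{(i)} = H_+ \sqcup \left(H_+^{(2)} \setminus (H_+ \cup \{0\})\right) \sqcup \{0\}$, which by Theorem \ref{2H+}(2) has cardinality $(q+1) + \frac{(q+1)^2}{2} + 1$. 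Since $n = \frac{q+1}{2}$, this equals $2n + 2n^2 + 1$, giving hypothesis (ii). For (iii), Theorem \ref{3H+} gives $H_+^{(3)} \cup \{0\} = \mathbb{F}_{q^2} = \Gamma$, so $\bigcup_{i=0}^{3} H_+^{(i)} = \Gamma$.

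\textbf{Step 2 (the sandwich inequality).} With $n = \frac{q+1}{2}$ and $\#\Gamma = q^2$, the left inequality $2n^2 + 2n + 1 < q^2$ becomes $\frac{(q+1)^2}{2} + (q+1) + 1 < q^2$, i.e. $\frac{1}{2}(q^2 + 2q + 1) + q + 2 < q^2$, i.e. $2q + 5 < q^2$ after clearing, which holds for all $q \geq 4$, hence for every odd prime power $q = p^k$ with $p \geq 5$ (and for $q = 3^k$ with $k \geq 2$, which is all we need). The right inequality $q^2 < \frac{1}{3}(1+2n)(3 + 2n + 2n^2) = \frac{1}{3}(q+2)\left(3 + (q+1) + \frac{(q+1)^2}{2}\right) = \frac{1}{3}(q+2) \cdot \frac{(q+1)^2 + 2(q+1) + 6}{2} = \frac{(q+2)(q^2 + 4q + 9)}{6}$ expands to $6q^2 < (q+2)(q^2 + 4q + 9) = q^3 + 6q^2 + 17q + 18$, i.e. $0 < q^3 + 17q + 18$, which is trivially true. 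So both inequalities hold in the required range; this is where the small-characteristic caveat $p^k > 12$ in the sibling statement for $H_-$ would matter, but here no extra restriction is needed.

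\textbf{Step 3 (conclusion).} All three hypotheses of Theorem \ref{2 and 3} are met, so the expansion critical index of $H_+$ is $2$ and the expansion limit index is $3$, which is exactly the assertion. I expect the only genuine subtlety to be \textbf{Step 1}: one must be careful that $\{0\}$ lies in $H_+^{(2)}$ (not just in the union with $\{0\}$ appended), that $H_+$ and $H_+^{(2)}$ are actually disjoint — which is the whole reason the case $-3 \in \textrm{SQ}_q$ is singled out, since in the complementary case $1 \in I_1$ forces $H_+ \subseteq H_+^{(2)}$ and the count $2n^2+2n+1$ fails — and that the bookkeeping of Theorem \ref{2H+}(2) is combined correctly. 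The arithmetic of Step 2 is routine, and Step 3 is immediate once the dictionary of Theorem \ref{2 and 3} is in place.
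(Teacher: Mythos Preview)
Your proof is correct and follows the same route as the paper's own (invoking Theorem~\ref{2 and 3} via Theorems~\ref{2H+} and~\ref{3H+}, after identifying the hypotheses on $p,k$ with $-3\in\textrm{SQ}_q$); in fact you fill in the verification of the sandwich inequality that the paper's terse proof leaves implicit. One small arithmetic slip: clearing the left inequality actually gives $4q+5<q^2$ (equivalently $q>5$), not $2q+5<q^2$; this is harmless since the hypotheses force $q\ge 7$ (the smallest admissible primes are $7$ and $13$, and the even-$k$ case has $q\ge 25$), and your parenthetical about $q=3^k$ is moot because $p=3$ is excluded.
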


\begin{proof}
Note that $H_{+}=q+1$. If  $p$ is an odd prime with $p\equiv 1 \text{ or } -5 \mod 12$ and $k$ any positive integer, or $p\ge 5$ and $k$ an even positive integer, then $-3\in \textrm{SQ}_q$. From Theorem \ref{2H+} and Theorem \ref{2 and 3},
the expansion critical index equals 2. By Theorem \ref{3H+} and Theorem \ref{2 and 3}, the expansion limit index equals 3. This completes the proof.
\end{proof}

\subsection{Some results on subset sums for $H_{-}$}
\begin{lemma}\label{H2SQ}
let $p$ be an odd prime,
$(a,b)\in \mathbb{F}_q^2$, and $t=ab$,   where
$(a,b)\neq (0,0)$.
Then $(a,b)\in H_{-}+H_{-}$
if and only if $t\neq 0$ and $(\frac{t}{2}-1)^2
-1\in \textrm{SQ}_q\cup \{0\}$.
\end{lemma}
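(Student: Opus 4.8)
The plan is to reduce membership in $H_{-}+H_{-}$ to the solvability of a single quadratic equation over $\mathbb{F}_q$ and then to recognize the discriminant condition. By definition, $(a,b)\in H_{-}+H_{-}$ means that there exist $x_1,x_2\in\mathbb{F}_q^{\times}$ with $a=x_1+x_2$ and $b=\tfrac{1}{x_1}+\tfrac{1}{x_2}$. First I would dispose of the degenerate cases: writing $b=\tfrac{x_1+x_2}{x_1x_2}=\tfrac{a}{x_1x_2}$ shows that $a=0$ forces $b=0$, and symmetrically $b=0$ forces $a=0$; hence any representable $(a,b)\neq(0,0)$ must satisfy $a\neq0$ and $b\neq0$, i.e. $t=ab\neq0$. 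This already gives the necessity of $t\neq0$, and in the converse direction it lets us assume $t\neq0$ from the outset.

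Assuming $t\neq0$, I would note that $(x_1,x_2)$ represents $(a,b)$ if and only if $x_1+x_2=a$ and $x_1x_2=\tfrac{a}{b}$, the second identity coming from $b=\tfrac{a}{x_1x_2}$ together with $a\neq0$. Thus $x_1,x_2$ are precisely the roots of $X^2-aX+\tfrac{a}{b}$, and such roots exist in $\mathbb{F}_q$ if and only if the discriminant $a^2-\tfrac{4a}{b}$ lies in $\textrm{SQ}_q\cup\{0\}$ (here $p$ odd is used, so $2$ is invertible and completing the square is valid). Since $\tfrac{a}{b}\neq0$, both roots are automatically nonzero, so the requirement $x_1,x_2\in\mathbb{F}_q^{\times}$ imposes nothing extra. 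Hence $(a,b)\in H_{-}+H_{-}$ if and only if $t\neq0$ and $a^2-\tfrac{4a}{b}\in\textrm{SQ}_q\cup\{0\}$.

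It remains to rewrite the discriminant. A direct computation gives $a^2-\tfrac{4a}{b}=\tfrac{ab(ab-4)}{b^2}=\tfrac{t(t-4)}{b^2}$, and since $b^2$ and $4$ are nonzero squares in $\mathbb{F}_q$, membership of $\tfrac{t(t-4)}{b^2}$ in $\textrm{SQ}_q\cup\{0\}$ is equivalent to membership of $\tfrac{t(t-4)}{4}=\bigl(\tfrac{t}{2}-1\bigr)^2-1$ in $\textrm{SQ}_q\cup\{0\}$. Chaining the three equivalences yields the stated characterization. I do not anticipate any genuine obstacle; the only points needing mild care are the degenerate analysis forcing $t\neq0$ and the observation that the roots of the auxiliary quadratic are automatically nonzero, both of which are immediate.
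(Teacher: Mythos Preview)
Your argument is correct and follows essentially the same route as the paper: reduce to the system $x+y=a$, $\tfrac{1}{x}+\tfrac{1}{y}=b$, convert to the quadratic $\lambda^2-a\lambda+\tfrac{a}{b}=0$, and identify the discriminant $a^2-\tfrac{4a}{b}=\tfrac{4((t/2-1)^2-1)}{b^2}$. If anything, you are slightly more careful than the paper, since you explicitly justify why $t\neq 0$ is forced and why the roots of the auxiliary quadratic are automatically nonzero (their product is $\tfrac{a}{b}\neq 0$), a point the paper leaves implicit.
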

\begin{proof}
Obviously, for $(a,b)\in
H_{-}^{(2)}\setminus \{(0,0)\}$, we have
$t=ab\neq 0$. From the definition of
$H_{-}^{(2)}$, we have
$(a,b)\in H_{-}^{(2)}\setminus \{(0,0)\}$
if and only if the following system of equations
\begin{equation}
\left\{
  \begin{array}{l}
    x+y=a   \\
    \frac{1}{x}+\frac{1}{y}=b
  \end{array}
\right.
\end{equation}
has solutions. This  is equivalent to the following equation
$$
\lambda^2- a \lambda+\frac{a}{b}=0.
$$
has solution.
Since the determinant of this quadratic equation
is $\Delta=a^2-\frac{4a}{b}
=\frac{ab}{b^2}(ab-4)
=\frac{4((\frac{t}{2}-1)^2-1)}{b^2}$, this quadratic
equation has solutions if and only if
$(\frac{t}{2}-1)^2-1 \in \textrm{SQ}_q
\cup\{0\}$. From the above discussion,
this lemma follows.
\end{proof}

\begin{lemma}\label{num-ab}
Let $p$ be an odd prime greater than 3.
Then
$\#\{ab: (a,b)\in H_{-}^{(2)}\setminus
\{(0,0)\}\}=\frac{q-1}{2}$.
\end{lemma}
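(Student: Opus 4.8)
The plan is to reduce the claim, via Lemma~\ref{H2SQ}, to a routine quadratic character sum. By Lemma~\ref{H2SQ}, for $(a,b)\neq(0,0)$ one has $(a,b)\in H_{-}^{(2)}$ if and only if $t:=ab\neq 0$ and $(\frac{t}{2}-1)^2-1\in \textrm{SQ}_q\cup\{0\}$. Since any $t_0$ satisfying this condition is already the product of the nonzero pair $(1,t_0)$, which then lies in $H_{-}^{(2)}$, we obtain
$$
\{ab:(a,b)\in H_{-}^{(2)}\setminus\{(0,0)\}\}=T:=\Bigl\{t\in\mathbb{F}_q^{\times}:\bigl(\tfrac{t}{2}-1\bigr)^2-1\in\textrm{SQ}_q\cup\{0\}\Bigr\},
$$
so it remains to prove $\#T=\frac{q-1}{2}$.

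First I would substitute $u=\frac{t}{2}-1$; as $p$ is odd this is a bijection from $\mathbb{F}_q^{\times}$ onto $\mathbb{F}_q\setminus\{-1\}$, so $\#T=\#\{u\in\mathbb{F}_q\setminus\{-1\}:u^2-1\in\textrm{SQ}_q\cup\{0\}\}$. I will compute $M:=\#\{u\in\mathbb{F}_q:u^2-1\in\textrm{SQ}_q\cup\{0\}\}$ and then observe that $u=-1$ satisfies $u^2-1=0\in\textrm{SQ}_q\cup\{0\}$, whence $\#T=M-1$.

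To compute $M$, split on whether $u^2-1=0$. The two values $u=\pm1$ (distinct since $p\neq2$) contribute $2$. For $u\neq\pm1$, the condition $u^2-1\in\textrm{SQ}_q$ reads $\eta(u^2-1)=1$, so the number of such $u$ equals $\tfrac12\bigl[(q-2)+\sum_{u\in\mathbb{F}_q}\eta(u^2-1)\bigr]$, the inclusion of $u=\pm1$ in the sum being harmless since $\eta(0)=0$. Now $\sum_{u\in\mathbb{F}_q}\eta(u^2-1)=-1$ by the standard evaluation of $\sum_{c\in\mathbb{F}_q}\eta(a_2c^2+a_1c+a_0)$ when the discriminant $a_1^2-4a_0a_2$ is nonzero (here it equals $4\neq0$; see \cite{LN83}). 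Hence there are $\frac{q-3}{2}$ values $u\neq\pm1$ with $u^2-1\in\textrm{SQ}_q$, giving $M=\frac{q-3}{2}+2=\frac{q+1}{2}$ and therefore $\#T=M-1=\frac{q-1}{2}$, as required.

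The computation is routine; the only points requiring care are the evaluation $\sum_{u\in\mathbb{F}_q}\eta(u^2-1)=-1$ and the bookkeeping of the degenerate values $u=\pm1$, in particular checking that exactly the value $t=0$ (i.e.\ $u=-1$) is the one removed when passing from $M$ to $\#T$. I do not expect any serious obstacle beyond keeping this short case analysis straight.
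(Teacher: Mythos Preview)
Your proof is correct and follows essentially the same route as the paper: both reduce via Lemma~\ref{H2SQ} to counting $\{u\in\mathbb{F}_q:u^2-1\in\textrm{SQ}_q\cup\{0\}\}$, obtain the value $\frac{q+1}{2}$, and then subtract~$1$ for the excluded point $u=-1$ (i.e.\ $t=0$). The only difference is how that intermediate count is established: the paper invokes Lemma~\ref{D_c} (whose proof counts points on the curve $x^2-\delta y^2=1$), whereas you compute it directly from the character-sum identity $\sum_{u\in\mathbb{F}_q}\eta(u^2-1)=-1$. Either method is standard and they yield the same number, so this is a cosmetic variation rather than a genuinely different approach.
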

\begin{proof}
From Lemma \ref{D_c}, we have
$$
\#\{x\in \mathbb{F}_q: x^2-1\in
\textrm{SQ}_q \cup
\{0\}\}=\frac{q+1}{2}.
$$
From Lemma \ref{H2SQ}, we have
$$
\#\{ab: (a,b)\in H_{-}^{(2)}\setminus
\{(0,0)\}\}=\frac{q-1}{2}.
$$
\end{proof}

\begin{theorem}\label{2H-}
Let $p$ be an odd prime.
Then

{\rm (1)} $\#H_{-}^{(2)}=1+\frac{(q-1)^2}{2}$;

{\rm (2)}
$\{0\}\in H_{-}^{(2)}$ and
$H_{-}\cap H_{-}^{(2)}
=\left\{
   \begin{array}{ll}
     \emptyset, & -3\in \textrm{NSQ}_q  \\
      H_{-}, &  -3\in \textrm{SQ}_q \text{~or~} p=3
   \end{array}
 \right.
$;

{\rm (3)} $\#(H_{-}^{(2)}\setminus
(H_{-}\cup \{0\}))
=\left\{
   \begin{array}{ll}
     \frac{(q-1)^2}{2}, & -3\in \textrm{NSQ}_q \\
     \frac{(q-1)^2}{2}-(q-1), &
-3\in \textrm{SQ}_q \text{~or~} p=3
   \end{array}
 \right.
$.
\end{theorem}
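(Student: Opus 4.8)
The plan is to route everything through the single scalar invariant $t=ab$ attached to a pair $(a,b)\in\mathbb{F}_q^2$, which is exactly what Lemmas \ref{H2SQ} and \ref{num-ab} were prepared for. For \textbf{(1)}, first observe $0=(0,0)\in H_{-}^{(2)}$, since $(x,\tfrac1x)+(-x,-\tfrac1x)=(0,0)$ with both summands in $H_{-}$. For the remaining nonzero elements, Lemma \ref{H2SQ} tells us that membership of $(a,b)\neq(0,0)$ in $H_{-}^{(2)}$ depends only on $t=ab$: for a fixed $t\in\mathbb{F}_q^{\times}$, either every pair with $ab=t$ lies in $H_{-}^{(2)}$ or none does. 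Since such a fibre $\{(a,b):ab=t\}$ has exactly $q-1$ elements and the fibres over distinct values of $t$ are disjoint, I would partition $H_{-}^{(2)}\setminus\{(0,0)\}$ into $\#\{ab:(a,b)\in H_{-}^{(2)}\setminus\{(0,0)\}\}$ blocks of size $q-1$. Lemma \ref{num-ab} evaluates this count as $\tfrac{q-1}{2}$, so $\#H_{-}^{(2)}=1+(q-1)\cdot\tfrac{q-1}{2}=1+\tfrac{(q-1)^2}{2}$.

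For \textbf{(2)}, the membership $0\in H_{-}^{(2)}$ is the identity just used. A point $(x,\tfrac1x)\in H_{-}$ has invariant $t=1$, so by Lemma \ref{H2SQ} it lies in $H_{-}^{(2)}$ iff $(\tfrac12-1)^2-1=-\tfrac34\in\textrm{SQ}_q\cup\{0\}$; since $4$ is a square, this is equivalent to $-3\in\textrm{SQ}_q$, together with the degenerate possibility $-\tfrac34=0$, i.e. $p=3$. The condition is independent of $x$, so either $H_{-}\subseteq H_{-}^{(2)}$ (when $-3\in\textrm{SQ}_q$ or $p=3$) or $H_{-}\cap H_{-}^{(2)}=\emptyset$ (when $-3\in\textrm{NSQ}_q$), which is (2). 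Statement \textbf{(3)} is then bookkeeping: $0\notin H_{-}$ because the coordinates of points of $H_{-}$ are nonzero, so $\#(H_{-}^{(2)}\setminus\{0\})=\tfrac{(q-1)^2}{2}$; subtracting the $q-1$ points of $H_{-}$ in the case $H_{-}\subseteq H_{-}^{(2)}$, and nothing otherwise, yields the two stated values.

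The whole argument is mechanical once Lemmas \ref{H2SQ} and \ref{num-ab} are in hand, so I do not anticipate a genuine obstacle; the one point needing attention is that Lemma \ref{num-ab} is stated for $p>3$. For $p=3$ I would redo that count directly: there $2=-1$ and $4=1$, so the admissibility condition $(\tfrac t2-1)^2-1\in\textrm{SQ}_q\cup\{0\}$ collapses to $t(t-1)\in\textrm{SQ}_q\cup\{0\}$, and the elementary character-sum identity $\sum_{t\in\mathbb{F}_q}\eta(t^2-t)=-1$ shows there are again exactly $\tfrac{q-1}{2}$ admissible values of $t$; hence all three formulas persist verbatim in characteristic $3$. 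That small characteristic-$3$ verification is the only step that is not pure fibre-counting.
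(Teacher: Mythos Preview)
Your argument is correct and follows the same route as the paper: invoke Lemma~\ref{H2SQ} to reduce to the invariant $t=ab$, count fibres of size $q-1$, and use Lemma~\ref{num-ab} to get $\tfrac{q-1}{2}$ admissible values of $t$; parts (2) and (3) are then the same specialization to $t=1$ and bookkeeping that the paper gives. Your explicit handling of $p=3$ via the character-sum identity is in fact more careful than the paper, which cites Lemma~\ref{num-ab} (stated only for $p>3$) without comment.
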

\begin{proof}
{\rm (1)} From Lemma \ref{H2SQ} and Lemma \ref{num-ab}, this statement follows.

{\rm (2)} From $(0,0)
=(1,1)+(-1,-1)$, we have
$(0,0)\in H_{-}^{(2)}$. Obviously,
$H_{-}\cap H_{-}^{(2)}
=\emptyset$ or $H_{-}$. Further,
$H_{-}\cap H_{-}^{(2)}=H_{-}$ if and
only if $(1,1)\in H_{-}^{(2)}$. From Lemma
\ref{H2SQ}, $(1,1)\in H_{-}^{(2)}$
if and only if $-3\in \textrm{SQ}_q \cup \{0\}$. Hence
this statement follows.

{\rm (3)} From the statements (1) and
(2), this statement follows.
\end{proof}

\begin{lemma}\label{irreducible}
If $t\neq -1$, the polynomial
$P_t(x,y)=(xy-x-y)(x+y+t)+xy
\in \mathbb{F}_q[x,y]$ is absolutely irreducible.
\end{lemma}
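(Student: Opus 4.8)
The plan is to show that $P_t(x,y)=(xy-x-y)(x+y+t)+xy$ is absolutely irreducible by first checking that it is irreducible over $\overline{\mathbb{F}_q}$ using its top-degree behaviour, and then ruling out a factorization into two quadratics. I would begin by expanding $P_t$ and recording its homogeneous components. The total degree is $3$, and the degree-$3$ part is the leading part of $(xy-x-y)(x+y)$, namely $(xy)(x+y)=x^2y+xy^2=xy(x+y)$. Since the degree-$3$ form $xy(x+y)$ factors as a product of three pairwise non-proportional linear forms $x$, $y$, $x+y$, any factorization of $P_t$ over $\overline{\mathbb{F}_q}$ into lower-degree factors must have those factors' leading forms dividing $xy(x+y)$; in particular $P_t$ cannot have a linear factor unless one of $x$, $y$, $x+y$ (up to adding a constant, i.e. $x-a$, $y-b$, or $x+y-c$) divides $P_t$.

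Next I would dispose of the possibility of a linear factor directly. If $x-a \mid P_t$ for some $a\in\overline{\mathbb{F}_q}$, then $P_t(a,y)\equiv 0$ as a polynomial in $y$. But $P_t(a,y)=(ay-a-y)(a+y+t)+ay$ is, generically in $y$, a genuine quadratic: its $y^2$-coefficient is $a$, so we would need $a=0$, and then $P_t(0,y)=(-y)(y+t)=-y^2-ty$, which is not identically zero. Hence no factor $x-a$ exists; by the symmetry of $P_t$ in $x$ and $y$ (note $P_t(x,y)=P_t(y,x)$), no factor $y-b$ exists either. For a factor of the form $x+y-c$: substitute $y=c-x$ into $P_t$ and check that the resulting polynomial in $x$ is not identically zero — again the quadratic/cubic terms in $x$ won't vanish identically, and this is where I'd use $t\neq -1$ if it becomes relevant (the hypothesis should surface in exactly this computation, since $t=-1$ is excluded). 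This exhausts all possible linear factors, so $P_t$ has no linear factor over $\overline{\mathbb{F}_q}$.

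Finally, since a degree-$3$ polynomial that is reducible over a field must have a linear factor over that field (the only partition of $3$ is $1+2$), having excluded linear factors over $\overline{\mathbb{F}_q}$ we conclude $P_t$ is irreducible over $\overline{\mathbb{F}_q}$, i.e. absolutely irreducible. The main obstacle is purely bookkeeping: one must carefully expand $P_t$, correctly identify which constant shifts of $x$, $y$, $x+y$ could occur as factor-candidates, and verify in each case that the relevant one-variable restriction is not the zero polynomial — and in doing so pin down precisely where the hypothesis $t\neq -1$ is needed (I expect it arises when testing the candidate $x+y-c$, where for $t=-1$ some cancellation could in principle make $P_{-1}$ split off a linear factor). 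No deep tools are required beyond the elementary fact that a cubic is reducible iff it has a linear factor.
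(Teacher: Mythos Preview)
Your strategy is correct and matches the paper's: both arguments identify the leading cubic form $xy(x+y)$, deduce that any linear factor over $\overline{\mathbb{F}_q}$ must be of the shape $x-a$, $y-b$, or $x+y-c$, and then eliminate each candidate. However, there is a concrete slip in your handling of the $x-a$ case. In $P_t(a,y)=(ay-a-y)(a+y+t)+ay$ the $y^2$-coefficient is $a-1$, not $a$ (the $-y$ in the first factor contributes $-y^2$). So the degenerate value is $a=1$, not $a=0$, and one finds
\[
P_t(1,y)=(-1)(1+y+t)+y=-(1+t),
\]
which vanishes identically exactly when $t=-1$. Thus the hypothesis $t\neq -1$ is already needed here (and by symmetry in the $y-b$ case), not only in the $x+y-c$ test as you guessed. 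Indeed, for $t=-1$ one has $P_{-1}(x,y)=(x-1)(y-1)(x+y)$. With this correction your argument goes through; the $x+y-c$ substitution also leads to the constraint $t=-1$, so all three cases close the same way.

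For comparison, the paper exploits the symmetry $P_t(x,y)=P_t(y,x)$ a bit further: if $x+b$ divides $P_t$ then so does $y+b$, and the remaining linear quotient must have leading form $x+y$; hence in every case some $x+y+a$ divides $P_t$. Reducing $P_t$ modulo $x+y+a$ gives $(t-a+1)xy+a(t-a)$, forcing $a=0$ and $t=-1$. This collapses the three checks into a single computation, but the content is the same as yours once the coefficient is fixed.
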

\begin{proof}
Suppose that the polynomial
$P_t(x,y)$ is not absolute irreducible. Then
there exist polynomials $A(x,y)$ and
$B(x,y)$ with coefficients in
the algebraic closure of $\mathbb{F}_q$
such that $P_t(x,y)=AB$, where
$deg(A)=1$ and $deg(B)=2$.
Further, the product of the leading terms of $A$ and $B$ is $xy(x+y)$. Then we consider the following cases:

{\rm 1)} Case $A(x,y)=x+b$. From the symmetry $P_t(x,y)=P_t(y,x)$, we have
$y+b$ is also a factor of $P_t(x,y)$. Hence,
$\frac{P_t(x,y)}{(x+b)(y+b)}$
is a factor of $P_t(x,y)$ of the
form $x+y+a$;

{\rm 2)} Case $A(x,y)=y+b$. From the similar discuss, $P_t(x,y)$ has a factor of the form $x+y+a$;

{\rm 3)} Case $A(x,y)=x+y+a$.

From the above discussion,
$x+y+a$ must be a factor of $P_t(x,y)$. Then
$$
P_t(x,y)=(xy-x-y)(x+y+t)+xy\equiv 0 \mod x+y+a.
$$
We have
$$
(t-a+1)xy+a(t-a)\equiv 0\mod x+y+a.
$$
Hence,  $t-a+1=0$ and
$a(t-a)=0$.   Hence, $a=0$ and $t=-1$.
It makes a contradiction. Hence, this lemma follows.
\end{proof}

\begin{theorem}\label{3H-}
Let $q$ be a power of odd prime with $q\ge 13$. Then
$H_{-}^{(3)}  \cup \{(0,0)\}
=\mathbb{F}_q^2 .
$
\end{theorem}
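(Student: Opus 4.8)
The plan is to fix a point $(a,b)\in\mathbb{F}_q^2\setminus\{(0,0)\}$ and to produce $x,y,z\in\mathbb{F}_q^{\times}$ with $x+y+z=a$ and $1/x+1/y+1/z=b$. Once this is done, $(x,1/x)+(y,1/y)+(z,1/z)=(a,b)$ shows $(a,b)\in H_{-}+H_{-}+H_{-}=H_{-}^{(3)}$, and since $(0,0)$ is adjoined by fiat, $H_{-}^{(3)}\cup\{(0,0)\}=\mathbb{F}_q^2$ follows. Putting $z=a-x-y$ and clearing denominators in $1/x+1/y+1/(a-x-y)=b$, the two conditions become the single equation of the affine plane cubic
$$
C_{a,b}:\qquad (a-x-y)(x+y)+xy-bxy(a-x-y)=0,
$$
and what is needed is an $\mathbb{F}_q$-rational point of $C_{a,b}$ lying off the three lines $x=0$, $y=0$, $x+y=a$ --- these lines being exactly the locus where clearing denominators could have introduced a spurious solution.

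The generic case is $b\neq 0$ and $ab\neq 1$. Here the invertible substitution $x=X/b$, $y=Y/b$ turns the defining polynomial of $C_{a,b}$ (after multiplying by $b^2$), rewritten in $X,Y$, into
$$
P_{-ab}(X,Y)=(XY-X-Y)(X+Y-ab)+XY=0.
$$
Since $-ab\neq -1$, Lemma~\ref{irreducible} shows $P_{-ab}$ is absolutely irreducible, hence so is $C_{a,b}$, and its projective closure is an absolutely irreducible plane cubic. The Hasse--Weil bound for plane curves therefore gives at least $q+1-2\sqrt q$ projective $\mathbb{F}_q$-points. The degree-$3$ form of $P_{-ab}$ is $XY(X+Y)$, so the only points at infinity are $[1:0:0]$, $[0:1:0]$, $[1:-1:0]$ (three of them), and a direct check shows that the only affine points of $P_{-ab}=0$ with $X=0$, $Y=0$, or $X+Y=ab$ are $(0,0)$, $(0,ab)$, $(ab,0)$ (three of them). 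Hence at least $q+1-2\sqrt q-6=q-5-2\sqrt q$ usable affine points survive, and this quantity is positive as soon as $q\ge 13$, since $q-5>2\sqrt q$ holds for every integer $q\ge 12$. Translating such a point back by $x=X/b$, $y=Y/b$, $z=a-x-y$ yields the required $x,y,z\in\mathbb{F}_q^{\times}$.

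It remains to settle two degenerate situations. If $b\neq 0$ and $ab=1$, then $(a,b)=(a,a^{-1})\in H_{-}$; as $(0,0)\in H_{-}^{(2)}$ by Theorem~\ref{2H-}(2), we obtain $(a,a^{-1})=(a,a^{-1})+(0,0)\in H_{-}^{(2)}+H_{-}=H_{-}^{(3)}$. If $b=0$ (whence $a\neq 0$), the coordinate swap $\sigma(u,v)=(v,u)$ is an automorphism of $(\mathbb{F}_q^2,+)$ fixing $H_{-}$, hence fixing $H_{-}^{(3)}$, so $(a,0)\in H_{-}^{(3)}\iff(0,a)\in H_{-}^{(3)}$, and $(0,a)$ is covered by the generic case (its second coordinate is $a\neq 0$ and the product of its coordinates is $0\neq 1$). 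Since every $(a,b)\neq(0,0)$ now lies in $H_{-}^{(3)}$, the theorem is proved.

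I expect the delicate part to be the passage from ``absolutely irreducible cubic'' to an effective lower bound: one must verify that at most three points lie at infinity and at most three further points of $C_{a,b}$ sit on the forbidden lines, so that the Hasse--Weil estimate $q+1-2\sqrt q$ can afford to lose exactly six points --- which works out precisely at the threshold $p^k\ge 13$ appearing in the statement. An equivalent route, more in the spirit of the preceding lemmas, is to seek $z\in\mathbb{F}_q^{\times}$ for which $(a-z)(bz-1)\big[(a-z)(bz-1)-4z\big]$ is a square in $\mathbb{F}_q$, using Lemmas~\ref{H2SQ} and~\ref{num-ab} to interpret this membership via $H_{-}^{(2)}$ and then bounding the $\mathbb{F}_q$-points of the genus-at-most-one curve $w^2=(a-z)(bz-1)\big[(a-z)(bz-1)-4z\big]$; this again bottoms out at $q\ge 13$.
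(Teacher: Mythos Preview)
Your proof is correct and follows essentially the same route as the paper: eliminate $z$, rescale by $b$ to land on the curve $P_{-ab}(X,Y)=0$, invoke Lemma~\ref{irreducible} for absolute irreducibility, apply the Hasse--Weil bound for plane cubics, and subtract the three points at infinity plus the three forbidden affine points to get $q-5-2\sqrt q>0$ for $q\ge 13$. The only cosmetic differences are in the degenerate cases: the paper normalizes first to $(-t,1)$ and treats $t=-1$ by the explicit identity $(1,1)=(1,1)+(1,1)+(-1,-1)$, whereas you handle $ab=1$ via $H_{-}\subseteq H_{-}+H_{-}^{(2)}$ and spell out the coordinate-swap symmetry for $b=0$ that the paper leaves implicit.
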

\begin{proof}
Obviously, $H_{-}\subseteq
H_{-}^{(3)}$. For any
$(a,b)\in \mathbb{F}_q^2\setminus \{(0,0)\}$, one has $a\neq 0$ or $b\neq 0$. Suppose that $b\neq 0$.
Then $(a,b)\in H_{-}^{(3)}$ if and only if the following system of equations
has solutions:
\begin{equation}
\left\{
  \begin{array}{l}
    x+y+z=a  \\
    \frac{1}{x}+\frac{1}{y}
+\frac{1}{z}=b.
  \end{array}
\right.
\end{equation}
We set $\frac{1}{0}=0$. This system
of equations is equivalent to the following system of equations:
$$
\left\{
  \begin{array}{l}
    bx+by+bz=ab  \\
    \frac{1}{bx}+\frac{1}{by}
+\frac{1}{bz}=1.
  \end{array}
\right.
$$
Hence, $(a,b)\in H_{-}^{(3)}$
if and only if $(ab,1)\in H_{-}^{(3)}$.
Hence, we just need to prove that
$(-t,1)\in H_{-}^{(3)}$ for any $t\in \mathbb{F}_q$.
If $t=-1$, $(1,1)=(1,1)+(1,1)+(-1,-1)
\in H_{-}^{(3)}$.
If $t\neq -1$, then
$(-t,1)\in H_{-}^{(3)}$ if and only if the following system of equations
has solutions:
$$
\left\{
  \begin{array}{l}
    x+y+z=-t \\
    \frac{1}{x}+\frac{1}{y}
+\frac{1}{z}=1.
  \end{array}
\right.
$$
From this system of equations, we have
$$
\frac{1}{x}+\frac{1}{y}
-\frac{1}{x+y+t}=1.
$$
Further, we obtain
$$
(x+y+t)(xy-x-y)+xy=0.
$$
This equation has solutions if and only if
the following affine curve
$$E_t: (x+y+t)(xy-x-y)+xy=0,$$ defined over
$\mathbb{F}_q$ has rational points.
The correspondence projective curve are
$$\overline{E}_t: (X+Y+tZ)(XY-ZX-YZ)+XYZ=0.$$
Then, we denote the $\mathbb{F}_q$-rational points sets associated with $E_t$ and $\overline{E}_t$ respectively by
$$E_t(\mathbb{F}_{q})=\{(x,y)\in \mathbb{F}_q^2: (x+y+t)(xy-x-y)+xy=0 \}$$
and
$$\overline{E}_t(\mathbb{F}_{q})=\{(X:Y:Z)\in \mathbb{P}_{\mathbb{F}_q}^2: (X+Y+tZ)(XY-ZX-YZ)+XYZ=0 \},$$
where $\mathbb{P}_{\mathbb{F}_q}^2$ denotes the projective space of dimension $2$ over $\mathbb{F}_q$
. The notation $(X:Y:Z)$ indicates a projective point, which is the same point as $(\lambda X:\lambda Y:\lambda Z)$ for any
$\lambda \neq 0$. Thus, affine solutions can be recovered by taking $\lambda=Z^{-1}$; except for solutions $(X:Y:0)$, which are the points at the infinite.\\
From Lemma \ref{irreducible}, $\overline{E}_t$ is an absolutely irreducible polynomial curves of degree $3$. Hasse-Weil bound states that
\begin{equation}\label{Et}
|\#\overline{E}_t(\mathbb{F}_q)-(q+1)|\leq 2\sqrt{q}.
\end{equation}
The only three projective
points $(X:Y:Z)$ with $Z=0$ are
$\{(1:0:0), (0:1:0), (1:-1:0)\}$.
The only three affine points with
$xy(x+y+t)=0$ are points in the set
$G_t=\{(0,0), (0,-t), (-t,0)\}$.
Hence,
$$
\#(E_t(\mathbb{F}_q)\setminus G_t)=\#(\overline{E}_t(\mathbb{F}_q))-6.
$$
From
Equation (\ref{Et}), if $q\ge 13$, we have $\#(\overline{E}_t(\mathbb{F}_q))-6 >0$. Hence, $(-t,1)\in H_{-}^{(3)}$, which concludes the proof.
\end{proof}

\begin{remark}
For $q=3,5,7,9, \text{ or } 11$, we can verify that $H_{-}^{(3)} \cup  \{(0,0)\}  \subsetneqq \mathbb{F}_q^2$.
\end{remark}

\begin{theorem}\label{subset H-}
Let $p$ be an odd prime with $p\equiv -1 \text{ or } 5 \mod 12$, $k$ be odd integer, and $q=p^k>12$. Let $H_{-}$ be defined as (\ref{H-}). Then, the expansion critical index and the expansion limit index of subset $H_{-}$ are equal to 2 and 3, respectively.
\end{theorem}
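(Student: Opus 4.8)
The plan is to follow the route used for Theorem \ref{subset H+}: take $\Gamma=\mathbb{F}_q^2$, observe that $\#H_{-}=q-1=2n$ (so $n=\frac{q-1}{2}$ and $\#\Gamma=q^2$), check the three numerical hypotheses of Theorem \ref{2 and 3} for $H=H_{-}$, and then invoke that theorem. Note that $p\equiv -1,5\pmod{12}$ forces $p\ge 5$, so the sphere formulas $\#B^n_2=2n^2+2n+1$ and $\#B^n_3=\frac{1}{3}(1+2n)(3+2n+2n^2)$ apply. Hence everything reduces to: the size inequality $2n^2+2n+1<\#\Gamma<\frac{1}{3}(1+2n)(3+2n+2n^2)$, the identity $\#(\cup_{i=0}^{2}H_{-}^{(i)})=2n^2+2n+1$, and the covering identity $\cup_{i=0}^{3}H_{-}^{(i)}=\Gamma$.

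The first step is to turn the arithmetic conditions on $p$ and $k$ into the single fact $-3\in\textrm{NSQ}_q$, which governs the whole computation. By the Quadratic Reciprocity Law (the Proposition of Section \ref{sec:preliminaries}), $-3$ is a nonsquare modulo $p$ exactly when $p\equiv -1$ or $5\pmod{12}$. Since $k$ is odd, $s:=1+p+\cdots+p^{k-1}$ is a sum of $k$ odd terms and hence odd, and $(p^k-1)/2=\frac{p-1}{2}\,s$; so, by Euler's criterion, $(-3)^{(p^k-1)/2}=\bigl((-3)^{(p-1)/2}\bigr)^{s}=(-1)^{s}=-1$, i.e.\ $-3\in\textrm{NSQ}_q$.

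Now the three checks. Using $2n+1=q$, the size inequality becomes $\frac{q^2+1}{2}<q^2<\frac{q(q^2+5)}{6}$; the left-hand inequality is trivial and the right-hand one is equivalent to $(q-1)(q-5)>0$, which holds as $q=p^k>12$. With $-3\in\textrm{NSQ}_q$, Theorem \ref{2H-} gives $(0,0)\in H_{-}^{(2)}$, $H_{-}\cap H_{-}^{(2)}=\emptyset$, and $\#\bigl(H_{-}^{(2)}\setminus(H_{-}\cup\{(0,0)\})\bigr)=\frac{(q-1)^2}{2}$; therefore $\cup_{i=0}^{2}H_{-}^{(i)}=H_{-}^{(2)}\cup H_{-}\cup\{(0,0)\}$ has cardinality $\frac{(q-1)^2}{2}+(q-1)+1=\frac{q^2+1}{2}=2n^2+2n+1$. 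Since $q>12$ means $q\ge 13$, Theorem \ref{3H-} gives $H_{-}^{(3)}\cup\{(0,0)\}=\mathbb{F}_q^2$, whence $\cup_{i=0}^{3}H_{-}^{(i)}=\mathbb{F}_q^2=\Gamma$. All hypotheses of Theorem \ref{2 and 3} are thus in force, and its conclusion is precisely that the expansion critical index and the expansion limit index of $H_{-}$ equal $2$ and $3$.

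I do not expect a serious obstacle, since the substantive facts are already packaged in Theorems \ref{2H-} and \ref{3H-}. The only delicate point is the descent of the quadratic character from $\mathbb{F}_q$ to $\mathbb{F}_p$ — this is exactly where the hypotheses $p\equiv -1,5\pmod{12}$ and $k$ odd enter, because if $-3$ were instead a square in $\mathbb{F}_q$, Theorem \ref{2H-} would give the strictly smaller second-layer count $\frac{(q-1)^2}{2}-(q-1)$ and the match with $\#B^n_2$ would fail — together with the elementary identity $\frac{(q-1)^2}{2}+(q-1)+1=2n^2+2n+1$ for $n=\frac{q-1}{2}$.
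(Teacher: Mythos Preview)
Your proof is correct and follows essentially the same route as the paper: reduce the hypotheses on $p$ and $k$ to $-3\in\textrm{NSQ}_q$, then invoke Theorems \ref{2H-}, \ref{3H-}, and \ref{2 and 3}. Your version is simply more explicit --- you spell out the Euler-criterion descent for $-3$, the arithmetic identity $\frac{(q-1)^2}{2}+(q-1)+1=\frac{q^2+1}{2}$, and the verification of the size inequality $(q-1)(q-5)>0$ --- whereas the paper leaves these implicit.
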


\begin{proof}
Note that $H_{-}=q-1$. If  $p$ is an odd prime with $p\equiv -1 \text{ or } 5 \mod 12$ and $k$ odd integer, then $-3\in \textrm{NSQ}_q$. From Theorem \ref{2H-} and Theorem \ref{2 and 3},
the expansion critical index is 2. By Theorem \ref{3H-} and Theorem \ref{2 and 3}, the expansion limit index is 3. This completes the proof.
\end{proof}

\section{Cayley graphs from $H_{+}$ and $H_{-}$}\label{sec:Cayley}
In this section, we will determined the eigenvalues of the Cayley graphs $Cay(\mathbb{F}_{q^2};H_{+})$ and $Cay(\mathbb{F}_{q}^2;H_{-})$. Then,
we will show the related graphs are Ramanujan or almost Ramanujan. It solves a conjecture proposed by Camarero and Mart\'{i}nez \cite{CM16}.

\subsection{Cayley graph  $Cay(\mathbb{F}_{q^2};H_{+})$}
\begin{theorem}\label{graph H+}
Let $p$ be an odd prime with $p\equiv 1 \text{ or } -5 \mod 12$ and $k$ be any positive integer, or $p\ge 5$ and $k$ be an even positive integer. Let $H_{+}$ be defined as (\ref{H+}). Then, the error correction capacity and the diameter of Cayley graph $Cay(\mathbb{F}_{q^2};H_{+})$ are 2 and 3, respectively.
\end{theorem}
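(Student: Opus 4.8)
The plan is to obtain the statement as a formal consequence of the correspondence developed in Section \ref{sec:subset sums} together with the combinatorial facts about $H_{+}$ established in Section \ref{sec:subsets}; no new computation is required. \textbf{Step 1: check the hypotheses of the dictionary theorems.} I would first verify that the pair $(\mathbb{F}_{q^2},H_{+})$ meets the standing assumptions of Theorems \ref{three quantities 0} and \ref{three quantities 1}. The additive group $\mathbb{F}_{q^2}$ is a finite Abelian group of exponent $p$, being a vector space over $\mathbb{F}_p$. Since $\mathcal{N}(0)=0\neq 1$ we have $0\notin H_{+}$, and since $\mathcal{N}(-z)=\mathcal{N}(z)$ we have $H_{+}=-H_{+}$; as $q$ is odd, $\#H_{+}=q+1$ is even, so writing $n=\frac{q+1}{2}$ we may list $H_{+}=\{\pm\beta_1,\dots,\pm\beta_n\}$. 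Finally, Theorem \ref{3H+} gives $H_{+}^{(3)}\cup\{0\}=\mathbb{F}_{q^2}$, so $H_{+}$ generates $\mathbb{F}_{q^2}$; and under either branch of the hypothesis one has $p\ge 5$, so the restriction $p\ge 5$ in those theorems is satisfied.

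\textbf{Step 2: invoke Theorem \ref{subset H+} and translate.} Under the arithmetic conditions on $p$ and $k$ in the statement one has $-3\in\mathrm{SQ}_q$, so Theorem \ref{subset H+} applies and yields that the expansion critical index of $H_{+}$ equals $2$ and its expansion limit index equals $3$. By the equality of quantities (1) and (2) in Theorem \ref{three quantities 0}, the error correction capacity of $Cay(\mathbb{F}_{q^2};H_{+})$ coincides with the expansion critical index, hence equals $2$. By the equality of quantities (1) and (2) in Theorem \ref{three quantities 1}, the diameter of $Cay(\mathbb{F}_{q^2};H_{+})$ coincides with the expansion limit index, hence equals $3$. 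This would finish the proof.

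\textbf{On the main obstacle.} The genuine work is already behind us: it is packaged inside Theorem \ref{subset H+}, which in turn rests on the exact count $\#H_{+}^{(2)}=1+\tfrac12(q+1)^2$ together with $\#\big(H_{+}^{(2)}\setminus(H_{+}\cup\{0\})\big)=\tfrac12(q+1)^2$ in the case $-3\in\mathrm{SQ}_q$ (Theorem \ref{2H+}), the covering identity $H_{+}^{(3)}\cup\{0\}=\mathbb{F}_{q^2}$ (Theorem \ref{3H+}), and the sphere-size comparison of Theorem \ref{2 and 3} applied with $n=\frac{q+1}{2}$ and $\#\mathbb{F}_{q^2}=q^2$. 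Granting those, the present theorem is purely formal; the only point demanding a little care is confirming that the congruence and parity conditions stated here are exactly those that force both $-3\in\mathrm{SQ}_q$ and $q$ large enough (so that $2n^2+2n+1<\#\mathbb{F}_{q^2}<\tfrac13(1+2n)(3+2n+2n^2)$), which is what makes Theorem \ref{subset H+} genuinely applicable.
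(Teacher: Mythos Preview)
Your proposal is correct and follows exactly the paper's approach: the paper's proof is the single sentence ``This theorem follows from Theorem \ref{subset H+}, Theorem \ref{three quantities 0} and Theorem \ref{three quantities 1},'' and your Steps 1--2 simply unpack this citation with the routine verifications spelled out.
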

\begin{proof}
This theorem follows from Theorem \ref{subset H+}, Theorem \ref{three quantities 0} and Theorem \ref{three quantities 1}.
\end{proof}

\begin{lemma}\label{UF}
Let $q=p^k$, with $p$ an odd prime. Then, for any $\alpha\in \mathbb{F}_{q^2}^{\times}$, we have
$$
\sum_{u\in H_{+}}
\zeta_p^{Tr_1^{2k}(\alpha u)}
=-\sum_{c\in \mathbb{F}_q^{\times}}
\zeta_p^{Tr_1^k(c+\frac{\mathcal{N}(\alpha)}{
c})}.
$$
\end{lemma}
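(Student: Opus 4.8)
The plan is to express the character sum over $H_+$ in terms of a Kloosterman-type sum over $\mathbb{F}_q^{\times}$ by summing over the level sets of the norm form. Recall $H_+ = \{z \in \mathbb{F}_{q^2} : \mathcal{N}(z) = 1\}$, where $\mathcal{N}(z) = z\overline{z}$, and observe that $u \mapsto \alpha u$ moves the relevant sum; the key identity is that for a fixed $c \in \mathbb{F}_q^{\times}$ the number of $z \in \mathbb{F}_{q^2}$ with $\mathcal{N}(z) = c$ is $q+1$ (this is Lemma~\ref{c=q+1} in disguise, since $\mathcal{N}(x+\sqrt{\delta}y) = x^2 - \delta y^2$). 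More precisely, I would write
$$
\sum_{u \in H_+} \zeta_p^{Tr_1^{2k}(\alpha u)} = \sum_{\substack{z \in \mathbb{F}_{q^2} \\ \mathcal{N}(z)=1}} \zeta_p^{Tr_1^{2k}(\alpha z)},
$$
and then use the standard detection of the condition $\mathcal{N}(z)=1$ via additive characters of $\mathbb{F}_q$: namely $\frac{1}{q}\sum_{\lambda \in \mathbb{F}_q} \zeta_p^{Tr_1^k(\lambda(\mathcal{N}(z)-1))}$ equals $1$ when $\mathcal{N}(z)=1$ and $0$ otherwise. Interchanging the order of summation gives
$$
\sum_{u \in H_+} \zeta_p^{Tr_1^{2k}(\alpha u)} = \frac{1}{q}\sum_{\lambda \in \mathbb{F}_q} \zeta_p^{-Tr_1^k(\lambda)} \sum_{z \in \mathbb{F}_{q^2}} \zeta_p^{Tr_1^{2k}(\alpha z) + Tr_1^k(\lambda \mathcal{N}(z))}.
$$

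Next I would evaluate the inner Gauss-type sum over $z \in \mathbb{F}_{q^2}$. Writing $z = x + \sqrt{\delta} y$ with $x,y \in \mathbb{F}_q$, we have $\mathcal{N}(z) = x^2 - \delta y^2$, and $Tr_1^{2k}(\alpha z)$ becomes an $\mathbb{F}_q$-linear form in $x$ and $y$; so $Tr_1^k(\lambda \mathcal{N}(z)) + Tr_1^{2k}(\alpha z)$ is a quadratic-plus-linear expression $Tr_1^k(\lambda x^2 + a x) + Tr_1^k(-\lambda\delta y^2 + b y)$ for suitable $a, b \in \mathbb{F}_q$ depending on $\alpha$ (with $ab$-type combinations controlled by $\mathcal{N}(\alpha)$). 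For $\lambda \ne 0$ I would apply Lemma~\ref{cx2a} separately to the $x$-sum and the $y$-sum, picking up a factor $\eta(\lambda)\eta(-\lambda\delta)\sqrt{p^*}^{2k} = \eta(-\delta)\cdot(\pm q)$; since $\delta$ is a nonsquare, $\eta(-\delta) = \eta(-1)$, and combined with $\sqrt{p^*}^{2k} = (p^*)^k = \eta(-1)^k q$ the sign should work out, and the resulting $\zeta_p$-phase will be of the shape $\zeta_p^{Tr_1^k(\mathcal{N}(\alpha)/(c))}$ after the substitution $c = -1/(4\lambda)$ or similar. The $\lambda = 0$ term contributes $\frac{1}{q}\cdot q^2 \cdot [\alpha = 0]$-type terms which vanish since $\alpha \ne 0$ (the $z$-sum over $\mathbb{F}_{q^2}$ of a nontrivial additive character is zero). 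Collecting everything and reindexing the sum over $\lambda \in \mathbb{F}_q^{\times}$ by $c$, the $\frac{1}{q}$ cancels against the $q$ from the Gauss sums, leaving exactly $-\sum_{c \in \mathbb{F}_q^{\times}} \zeta_p^{Tr_1^k(c + \mathcal{N}(\alpha)/c)}$, where the overall minus sign comes from $(-1)^{k-1}$ appearing twice (from the two applications of Lemma~\ref{cx2a}) times the sign bookkeeping.

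The main obstacle, I expect, will be the careful bookkeeping of the signs and the phase term: getting $\eta(c)$-factors, the powers of $(-1)^{k-1}$, the factor $\eta(-\delta)$, and the reindexing substitution $\lambda \leftrightarrow c$ all to conspire into the clean right-hand side with exactly a leading $-1$ and argument $c + \mathcal{N}(\alpha)/c$. In particular one must check that the linear-form coefficients $a, b$ arising from $Tr_1^{2k}(\alpha z)$ recombine via the completion-of-square term $Tr_1^k(-\tfrac{a^2}{4\lambda}) + Tr_1^k(-\tfrac{b^2}{4(-\lambda\delta)})$ into precisely $Tr_1^k(\mathcal{N}(\alpha)/c)$ for the right choice of $c$ in terms of $\lambda$; this is where the norm structure of $\alpha$ must be used, likely via $Tr_1^{2k}(\alpha z) = Tr_1^k(Tr_{k}^{2k}(\alpha z)) = Tr_1^k(\alpha z + \overline{\alpha z})$ and expanding. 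An alternative cleaner route avoiding coordinates is to use the multiplicative structure directly: $H_+$ is a cyclic subgroup of $\mathbb{F}_{q^2}^{\times}$ of order $q+1$, and the sum $\sum_{u \in H_+}\zeta_p^{Tr_1^{2k}(\alpha u)}$ is a Gauss-sum-like object over this norm-one torus; one can relate it to a twisted Kloosterman sum via the parametrization $u = v/\overline{v}$. I would present whichever of these makes the sign tracking most transparent, but the character-sum-interchange approach sketched above is the most mechanical and least error-prone.
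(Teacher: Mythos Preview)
Your proposal is correct and follows essentially the same route as the paper: detect $\mathcal{N}(z)=1$ by additive-character orthogonality over $\mathbb{F}_q$, write $z=x+\sqrt{\delta}y$ to split the inner sum into two one-variable quadratic sums, apply Lemma~\ref{cx2a} to each, and then do the sign bookkeeping. The paper makes the linear coefficients explicit as $2a$ and $2\delta b$ (for $\alpha=a+\sqrt{\delta}b$), so the completed-square phases are $-a^2/c$ and $\delta b^2/c$, whose sum is $-\mathcal{N}(\alpha)/c$; the only reindexing needed is $c\mapsto -c$, and the overall sign collapses via $p^{-k}p^{*k}\eta(-\delta)=-\eta_0^k(-1)\eta(-1)=-1$ since $\eta(-1)=\eta_0^k(-1)$.
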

\begin{proof}
Let $\alpha=a+\sqrt{\delta}b$,
$S=\sum_{u\in H_{+}}\zeta_p^{Tr_1^{2k}(\alpha u)}$,
and $C_1
=\{(x,y)\in \mathbb{F}_q^2:
x^2-\delta y^2=1\}$. Then
$$
S=\sum_{(x,y)\in C_1}\zeta_p^{Tr_1^{2k}(
(a+\sqrt{\delta}b)(x+\sqrt{\delta}y))}.
$$
From
$(a+\sqrt{\delta}b)(x+\sqrt{\delta}y)
=ax+\delta by+(ay+bx)\sqrt{\delta}$,
we have
$Tr_k^{2k}((a+\sqrt{\delta}b)
(x+\sqrt{\delta}y))
=2ax+2\delta by$ and
\begin{align*}
S=&\sum_{(x,y)\in C_1}\zeta_p^{Tr_1^{k}(
2ax+2\delta by)}\\
=& p^{-k}
\sum_{(x,y)\in \mathbb{F}_{q}^2}\zeta_p^{Tr_1^{k}(
2ax+2\delta by)}
\sum_{c\in \mathbb{F}_{q}}\zeta_p^{Tr_1^{k}(
c(x^2-\delta y^2-1))}\\
=&
p^{-k}\sum_{c\in \mathbb{F}_{q}}\zeta_p^{Tr_1^{k}(
-c)}
\sum_{x\in \mathbb{F}_{q}}\zeta_p^{Tr_1^{k}(
cx^2+2ax)}
\sum_{y\in \mathbb{F}_{q}}\zeta_p^{Tr_1^{k}(
-c\delta y^2+2\delta by)}\\
=& p^{-k}\sum_{c\in \mathbb{F}_{q}^\times}\zeta_p^{Tr_1^{k}(
-c)}
\sum_{x\in \mathbb{F}_{q}}\zeta_p^{Tr_1^{k}(
cx^2+2ax)}
\sum_{y\in \mathbb{F}_{q}}\zeta_p^{Tr_1^{k}(
-c\delta y^2+2\delta by)}.
\end{align*}
From Lemma \ref{cx2a}, we have
\begin{align*}
S=& p^{-k}p^{*k}
\eta(-\delta)\sum_{c\in \mathbb{F}_{q}^\times}\zeta_p^{Tr_1^{k}(
-c+\frac{\delta b^2-a^2}{c})}\\
=& -\eta_0^k(-1)\eta(-1) \sum_{c\in \mathbb{F}_{q}^\times}\zeta_p^{Tr_1^{k}(
c+\frac{a^2-\delta b^2}{c})}
\\
=& -\sum_{c\in \mathbb{F}_{q}^\times}\zeta_p^{Tr_1^{k}(
c+\frac{a^2-\delta b^2}{c})}
\;\;\;(\text{from} \;\eta(-1)=\eta_0^k(-1))\\
=& -\sum_{c\in \mathbb{F}_{q}^\times}\zeta_p^{Tr_1^{k}(
c+\frac{\mathcal{N}(\alpha)}{c})},
\end{align*}
which ends the proof.
\end{proof}

\begin{theorem}
Let $q=p^k$, with $p$ an odd prime. Then, $Cay(\mathbb{F}_{q^2};H_{+})$ is a Ramanujan graph.
\end{theorem}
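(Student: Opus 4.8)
The goal is to show every eigenvalue $\lambda_\chi$ of the adjacency matrix of $Cay(\mathbb{F}_{q^2};H_+)$ satisfies either $|\lambda_\chi| = \#H_+ = q+1$ or $|\lambda_\chi| \le 2\sqrt{q}$, since $\#H_+ - 1 = q$ so the Ramanujan threshold is exactly $2\sqrt{\#H_+-1} = 2\sqrt{q}$. By Theorem~\ref{eigenvelue}, the eigenvalues are $\lambda_\chi = \sum_{\beta \in H_+}\chi(\beta)$ where $\chi$ ranges over the characters of the additive group $\mathbb{F}_{q^2}$. Every such character has the form $\chi_\alpha(u) = \zeta_p^{Tr_1^{2k}(\alpha u)}$ for a unique $\alpha \in \mathbb{F}_{q^2}$, so I would split into the two cases $\alpha = 0$ and $\alpha \neq 0$.

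First, for the trivial character $\alpha = 0$, one gets $\lambda_{\chi_0} = \#H_+ = q+1$, which is the degree and is allowed. Second, for $\alpha \in \mathbb{F}_{q^2}^\times$, I would invoke Lemma~\ref{UF}, which identifies the character sum exactly:
\[
\lambda_{\chi_\alpha} = \sum_{u\in H_+}\zeta_p^{Tr_1^{2k}(\alpha u)} = -\sum_{c\in \mathbb{F}_q^\times}\zeta_p^{Tr_1^k\!\left(c + \frac{\mathcal{N}(\alpha)}{c}\right)} = -K_q(1,\mathcal{N}(\alpha)).
\]
Since $\mathcal{N}$ maps $\mathbb{F}_{q^2}^\times$ onto $\mathbb{F}_q^\times$, the value $b := \mathcal{N}(\alpha)$ lies in $\mathbb{F}_q^\times$, so this is a genuine Kloosterman sum $K_q(1,b)$ with $(1,b) \in \mathbb{F}_q^\times \times \mathbb{F}_q^\times$. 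Then the Hasse--Weil bound of Theorem~\ref{Kloost} gives $|\lambda_{\chi_\alpha}| = |K_q(1,b)| \le 2\sqrt{q}$, which is precisely the Ramanujan bound. Combining the two cases, every eigenvalue is either $q+1$ (in absolute value) or at most $2\sqrt{q}$, so the graph is Ramanujan.

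The routine but necessary bookkeeping is to confirm that the degree really is $q+1$ (this is $\#H_+ = q+1$, already established after \eqref{H+}) and that $H_+ = -H_+$ with $0 \notin H_+$ so that the Cayley graph and Theorem~\ref{eigenvelue} apply: indeed $z \in H_+ \iff \mathcal{N}(z) = 1$, and $\mathcal{N}(-z) = \mathcal{N}(z)$ while $\mathcal{N}(0) = 0 \neq 1$. There is essentially no obstacle here — the entire content has been front-loaded into Lemma~\ref{UF}, which reduces the eigenvalue to (minus) a Kloosterman sum, and into Theorem~\ref{Kloost}, which bounds it; the only thing to check is the arithmetic identity $2\sqrt{\#H_+ - 1} = 2\sqrt{q}$ matching the Hasse--Weil bound exactly, so the inequality is tight enough to land inside the Ramanujan regime rather than merely the "almost Ramanujan" regime. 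The proof is therefore just the two-line case split above.
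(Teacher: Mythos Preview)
Your proof is correct and follows essentially the same route as the paper: parametrize the additive characters by $\alpha\in\mathbb{F}_{q^2}$, apply Lemma~\ref{UF} to turn the nontrivial eigenvalues into Kloosterman sums $-K_q(1,\mathcal{N}(\alpha))$, and then invoke the Weil bound (Theorem~\ref{Kloost}) to land exactly on $2\sqrt{q}=2\sqrt{\#H_+-1}$. Your write-up is in fact more careful than the paper's, which omits the verification that $H_+=-H_+$, $0\notin H_+$, and $\mathcal{N}(\alpha)\in\mathbb{F}_q^\times$.
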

\begin{proof}
Note that the characters of $\mathbb{F}_{q^2}$ are given by $\chi_{\alpha}(x)=\zeta_p^{Tr^{2k}_1(\alpha x)}$ as $\alpha$ ranges over $\mathbb{F}_{q^2}$. From Theorem \ref{eigenvelue}, all non-trivial eigenvalues of $Cay(\mathbb{F}_{q^2};H_{+})$ are $\lambda_{\chi_{\alpha}}=\sum_{u\in H_{+}} \chi_{\alpha}(u)$, with $\alpha \neq 0$.  When $\alpha \neq 0$, form Lemma \ref{UF} and Theorem \ref{Kloost}, one has
$$
|\lambda_{\chi_{\alpha}}|\le 2 \sqrt{q}.
$$
From Lemma \ref{c=q+1}, we have
$$
|\lambda_{\chi_{\alpha}}|\le 2 \sqrt{\# H_{+}-1}.
$$
Hence, $Cay(\mathbb{F}_{q^2};H_{+})$ is a Ramanujan graph. This completes the proof.
\end{proof}

\begin{remark}
When $q=p$, $p\equiv 3 \mod 4$, we can identity $\mathbb{F}_{p^2}$ as $\mathbb{F}_p[\sqrt{-1}]$. Thus, $Cay(\mathbb{F}_{p^2};H_{+})$ are
same as the graphs $\mathcal{G}_p$ defined by Camarero and Mart\'{\i}nez in \cite{CM16}. It solves the Conjecture $31$ in \cite{CM16}.
\end{remark}

\subsection{Cayley graph  $Cay(\mathbb{F}_q^2;H_{-})$}

\begin{theorem}\label{graph H-}
Let $p$ be an odd prime with $p\equiv -1 \text{ or } 5 \mod 12$, $k$ be an odd integer, and  $q=p^k>12$. Let $H_{-}$ be defined as (\ref{H-}). Then, the error correction capacity and the diameter of subset $Cay(\mathbb{F}_q^2;H_{-})$ are 2 and 3, respectively.
\end{theorem}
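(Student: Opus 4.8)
The plan is to deduce this statement directly from the structural correspondence established in Section~\ref{sec:subset sums} together with the arithmetic of $H_{-}$ worked out in Theorem~\ref{subset H-}; in fact the statement is an immediate corollary, exactly parallel to Theorem~\ref{graph H+}. First I would check that the pair $(\Gamma, H) = (\mathbb{F}_q^2, H_{-})$ satisfies the standing hypotheses of Theorems~\ref{three quantities 0} and~\ref{three quantities 1}: the group $\mathbb{F}_q^2$ is a finite Abelian group of exponent $p$, and since $p \equiv -1$ or $5 \bmod 12$ forces $p \ge 5$, the prime is admissible; moreover $H_{-} = \{(x, 1/x) : x \in \mathbb{F}_q^{\times}\}$ visibly satisfies $H_{-} = -H_{-}$ and $(0,0) \notin H_{-}$, it has even cardinality $\# H_{-} = q - 1 =: 2n$ with $n = \tfrac{q-1}{2}$, and by Theorem~\ref{3H-} it generates $\mathbb{F}_q^2$. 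Hence $\mathcal{C}(\mathbb{F}_q^2; H_{-})$ and $Cay(\mathbb{F}_q^2; H_{-})$ are legitimately defined and all the earlier machinery applies.

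Next I would invoke Theorem~\ref{subset H-}, whose hypotheses ($p \equiv -1$ or $5 \bmod 12$, $k$ odd, $q = p^k > 12$) are exactly those of the present statement: it yields that the expansion critical index of $H_{-}$ equals $2$ and its expansion limit index equals $3$. Internally this is where the genuine content lies: the condition $p \equiv -1, 5 \bmod 12$ with $k$ odd gives $-3 \in \mathrm{NSQ}_q$ through the Quadratic Reciprocity Law; Theorem~\ref{2H-} then pins down $\# H_{-}^{(2)}$ and $\# (H_{-}^{(2)} \setminus (H_{-} \cup \{0\}))$; Theorem~\ref{3H-} — resting on the absolute irreducibility of $P_t$ from Lemma~\ref{irreducible} and the Hasse--Weil bound — gives $H_{-}^{(3)} \cup \{(0,0)\} = \mathbb{F}_q^2$; and Theorem~\ref{2 and 3} assembles these facts (using $2n^2+2n+1 = \tfrac{q^2+1}{2} < q^2 < \tfrac{q(q^2+5)}{6} = \tfrac13(1+2n)(3+2n+2n^2)$, valid for $q \ge 13$) into the two index values.

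Finally I would translate these two indices into graph-theoretic language. Theorem~\ref{three quantities 0} identifies the expansion critical index of $H_{-}$ with the error correction capacity of $Cay(\mathbb{F}_q^2; H_{-})$, so the latter equals $2$; Theorem~\ref{three quantities 1} identifies the expansion limit index of $H_{-}$ with the diameter of $Cay(\mathbb{F}_q^2; H_{-})$, so the latter equals $3$. This is precisely the assertion, completing the argument.

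I do not expect a genuine obstacle: all the analytic and geometric work (the Hasse--Weil estimate, the irreducibility argument, the cardinality bookkeeping) has already been done in the cited results, so what remains is purely formal. The only point requiring a little care is confirming that $H_{-}$ really is a symmetric generating set avoiding the origin — in particular that it generates all of $\mathbb{F}_q^2$, which is not obvious a priori but is supplied by Theorem~\ref{3H-} under the assumption $q > 12$ — so that the hypotheses of Theorems~\ref{three quantities 0} and~\ref{three quantities 1} are legitimately in force.
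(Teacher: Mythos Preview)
Your proposal is correct and follows essentially the same approach as the paper: the paper's proof is a one-line citation of Theorem~\ref{subset H-}, Theorem~\ref{three quantities 0}, and Theorem~\ref{three quantities 1}, which is exactly the route you describe. Your additional verification that $H_{-}$ is a symmetric generating set avoiding the origin is a sensible hygiene check that the paper leaves implicit.
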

\begin{proof}
This theorem follows from Theorem \ref{subset H-}, Theorem \ref{three quantities 0} and Theorem \ref{three quantities 1}.
\end{proof}

\begin{theorem}\label{almost Ramanujan}
Let $q=p^k$, with $p$ an odd prime. Then, all non-trivial eigenvalues $\lambda$ of $Cay(\mathbb{F}_{q}^2;H_{-})$ satisfy
$$
|\lambda| \le 2 \sqrt{\#  H_{-}  +1}.
$$
\end{theorem}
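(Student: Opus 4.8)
The plan is to compute all non-trivial eigenvalues explicitly via Theorem~\ref{eigenvelue} and recognize them as Kloosterman sums (or degenerate versions thereof), then invoke the Hasse--Weil bound of Theorem~\ref{Kloost}. First I would recall that the characters of the additive group $\mathbb{F}_q^2$ are exactly the maps $\chi_{(a,b)}(x,y)=\zeta_p^{Tr_1^k(ax+by)}$ as $(a,b)$ ranges over $\mathbb{F}_q^2$, the trivial one corresponding to $(a,b)=(0,0)$. Plugging $H_{-}=\{(x,1/x):x\in\mathbb{F}_q^\times\}$ into Theorem~\ref{eigenvelue}, every non-trivial eigenvalue has the shape
$$
\lambda_{(a,b)}=\sum_{(u,v)\in H_{-}}\chi_{(a,b)}(u,v)=\sum_{x\in\mathbb{F}_q^\times}\zeta_p^{Tr_1^k\!\left(ax+\tfrac{b}{x}\right)},
\qquad (a,b)\neq(0,0).
$$

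Next I would split into cases according to whether $a$ and $b$ vanish. If $ab\neq 0$, the right-hand side is precisely the Kloosterman sum $K_q(a,b)$, so Theorem~\ref{Kloost} gives $|\lambda_{(a,b)}|=|K_q(a,b)|\le 2\sqrt{q}$. Since $\#H_{-}=q-1$, this is exactly $2\sqrt{\#H_{-}+1}$, as required. If instead exactly one of $a,b$ is zero, say $a=0$ and $b\neq 0$, then as $x$ runs over $\mathbb{F}_q^\times$ so does $b/x$, hence $\lambda_{(0,b)}=\sum_{y\in\mathbb{F}_q^\times}\zeta_p^{Tr_1^k(by)}=-1$ because the full sum over $\mathbb{F}_q$ vanishes; the case $b=0$, $a\neq 0$ is identical. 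In either degenerate case $|\lambda|=1\le 2\sqrt{\#H_{-}+1}$ trivially.

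Combining the two cases shows every non-trivial eigenvalue $\lambda$ of $Cay(\mathbb{F}_q^2;H_{-})$ satisfies $|\lambda|\le 2\sqrt{\#H_{-}+1}$, which is the claim. The only substantive ingredient is the Hasse--Weil estimate for Kloosterman sums, which is already available as Theorem~\ref{Kloost}; the remaining work is the routine bookkeeping of the degenerate characters, so I do not anticipate a genuine obstacle here. (It is worth noting that the bound obtained, $2\sqrt{q}=2\sqrt{\#H_{-}+1}$, is slightly weaker than the Ramanujan threshold $2\sqrt{\#H_{-}-1}=2\sqrt{q-2}$, which is precisely why the graph is only ``almost Ramanujan'' rather than Ramanujan.)
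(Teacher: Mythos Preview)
Your proof is correct and follows essentially the same approach as the paper: identify the non-trivial eigenvalues as the sums $\sum_{x\in\mathbb{F}_q^\times}\zeta_p^{Tr_1^k(ax+b/x)}$ via Theorem~\ref{eigenvelue}, then apply the Kloosterman bound of Theorem~\ref{Kloost} and use $\#H_{-}=q-1$. In fact you are slightly more careful than the paper, which invokes Theorem~\ref{Kloost} for all $(a,b)\neq(0,0)$ even though that theorem is stated only for $(a,b)\in\mathbb{F}_q^\times\times\mathbb{F}_q^\times$; your explicit treatment of the degenerate case (where one coordinate vanishes and the sum equals $-1$) fills that small gap.
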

\begin{proof}
Note that the characters of $\mathbb{F}_{q}^2$ are given by $\chi_{a,b}(x)=\zeta_p^{Tr^{k}_1(a x+b y)}$ as $(a,b)$ ranges over $\mathbb{F}_{q}^2$.
From Theorem \ref{eigenvelue}, all non-trivial eigenvalues of $Cay(\mathbb{F}_{q}^2;H_{-})$ are $\lambda_{\chi_{a,b}}=\sum_{(x,y)\in H_{-}} \chi_{a,b}(x,y)$, with $(a,b) \neq (0,0)$. When $(a,b) \neq (0,0)$, form Theorem \ref{Kloost}, one has
$$
|\lambda_{\chi_{a,b}}|\le 2 \sqrt{q}.
$$
From $\# H_{-}=q-1$, we have
$$
|\lambda_{\chi_{\alpha}}|\le 2 \sqrt{\# H_{-}+1}.
$$
This completes the proof.
\end{proof}
\begin{remark}
From Theorem \ref{almost Ramanujan}, Cayley graph $Cay(\mathbb{F}_{q}^2;H_{-})$ are almost Ramanujan.
\end{remark}

\section{$2$-Quasi-Perfect Lee Codes from $H_{+}$ and $H_{-}$}\label{sec:quasi-perfect}
In this section, we will present the two classes of two-quasi-perfect Lee codes. If $p\equiv -5 \mod 12$ and $k=1$, then $-1$ is a nonsquare in $\mathbb{F}_p^{\times}$ and $\mathbb{F}_{p^2}=\mathbb{F}_p[\sqrt{-1}]$. Thus, $\mathcal{N}(x+y\sqrt{-1})=x^2+y^2$. In this case, the codes defined in \cite{CM16} is same as $\mathcal{C}(\mathbb{F}_{p^2};H_{+})$. If $p\equiv 5 \mod 12$ and $k=1$, then $-1$ is a square in $\mathbb{F}_p^{\times}$ and $x^2+y^2=(x+y\sqrt{-1})(x-y\sqrt{-1})$. In this case, the codes defined in \cite{CM16} is same as $\mathcal{C}(\mathbb{F}_{p}^2;H_{-})$. Hence, our constructions generalize the constructions presented in \cite{CM16}.

\subsection{Quasi-Perfect Lee Codes $\mathcal{C}(\mathbb{F}_{q^2};H_{+})$}
Let $\{\mathbf{e}_1,\cdots, \mathbf{e}_{2k}\}$ be a basis of $\mathbb{F}_{q^2}$ over $\mathbb{F}_p$. Let $H_{+}=\{\pm \beta_1,\cdots,\pm \beta_{n}\}$, where $n=\frac{q+1}{2}$ and $\beta_j=\sum_{i=1}^{2k} h_{i,j} e_i$. Then, from the definition of $\mathcal{C}(\mathbb{F}_{q^2};H_{+})$, the parity-check matrix of $\mathcal{C}(\mathbb{F}_{q^2};H_{+})$ is
\begin{align}
  \label{matrix:H+}
  \mathbf{M}_{+}=\left(\begin{array}{cccc}
      h_{1,1} & h_{1,2} & \dots & h_{1,n}\\
      h_{2,1} & h_{2,2} & & h_{2,n}\\
      \vdots & \vdots &  & \vdots\\
      h_{2k,1} &h_{2k,2} & & h_{2k,n}
   \end{array}\right).
\end{align}
By Theorem \ref{3H+}, the rank of $\mathbf{M}_{+}$ is $2k$. Thus, the dimension of $\mathcal{C}(\mathbb{F}_{q^2};H_{+})$ is $n-2k$. From Theorem \ref{subset H+}, Theorem \ref{three quantities 0} and Theorem \ref{three quantities 1},  we have the following theorem.
\begin{theorem}\label{code H+}
Let $p$ be an odd prime with $p\equiv 1 \text{ or } -5 \mod 12$ and $k$ be any positive integer, or $p\ge 5$ and $k$ be an even positive integer. Let $H_{+}$ be defined as (\ref{H+}). Then, the linear code $\mathcal{C}(\mathbb{F}_{q^2};H_{+})$ is a $2$-quasi-perfect Lee code over $\mathbb{F}_p^n$ with dimension $\frac{q+1}{2}-2k$, where $n=\frac{q+1}{2}$. Moreover, $\mathbf{M}_{+}$ is the parity-check matrix of $\mathcal{C}(\mathbb{F}_{q^2};H_{+})$.
\end{theorem}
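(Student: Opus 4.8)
The plan is to assemble Theorem \ref{code H+} directly from the machinery already built up in the preceding sections, so the proof is essentially a bookkeeping argument rather than a fresh computation. First I would recall the hypotheses: $p$ is an odd prime with $p\equiv 1$ or $-5\bmod 12$ and $k$ arbitrary, or $p\ge 5$ and $k$ even; set $q=p^k$ and $n=\frac{q+1}{2}$. Under these hypotheses Proposition~1(3) (together with the fact that $-3\in\textrm{SQ}_q$ iff $-3\in\textrm{SQ}_p$ or $k$ even) gives $-3\in\textrm{SQ}_q$, which is exactly the condition invoked in Theorem \ref{subset H+}. Hence the expansion critical index and expansion limit index of $H_{+}$ are $2$ and $3$ respectively. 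By Theorem \ref{three quantities 0} the expansion critical index equals the error correction of $\mathcal{C}(\mathbb{F}_{q^2};H_{+})$, so the code is $2$-error-correcting; by Theorem \ref{three quantities 1} the expansion limit index equals the covering radius of $\mathcal{C}(\mathbb{F}_{q^2};H_{+})$, so the covering radius is $3$. A code that is $2$-error-correcting and $3$-covering is by definition $2$-quasi-perfect, which settles the main assertion.

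Next I would pin down the dimension. The ambient group $\Gamma=\mathbb{F}_{q^2}$ is an Abelian group of exponent $p$ with $\dim_{\mathbb{F}_p}\mathbb{F}_{q^2}=2k$; the general formula stated in Section~\ref{sec:preliminaries} says $\mathcal{C}(\Gamma;H)$ has dimension $n-\dim_{\mathbb{F}_p}\Gamma$ provided $H$ generates $\Gamma$. Here $H=H_{+}$ with $\#H_{+}=q+1=2n$, and $H_{+}$ does generate $\mathbb{F}_{q^2}$ as an $\mathbb{F}_p$-vector space because Theorem \ref{3H+} gives $H_{+}^{(3)}\cup\{0\}=\mathbb{F}_{q^2}$, so in particular the $\mathbb{F}_p$-span of $H_{+}$ is all of $\mathbb{F}_{q^2}$. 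Equivalently, the matrix $\mathbf{M}_{+}$ of \eqref{matrix:H+}, whose columns are the coordinate vectors of $\beta_1,\dots,\beta_n$, has rank $2k$, so its kernel has dimension $n-2k=\frac{q+1}{2}-2k$. That $\mathbf{M}_{+}$ is a parity-check matrix of $\mathcal{C}(\mathbb{F}_{q^2};H_{+})$ is immediate from the definition $\mathcal{C}(\Gamma;H)=\{(c_1,\dots,c_n): \sum c_j\beta_j=0\}$ once the $\beta_j$ are written in coordinates with respect to the chosen basis $\{\mathbf{e}_1,\dots,\mathbf{e}_{2k}\}$.

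I do not expect a serious obstacle here, since every ingredient has already been proved; the only things to be careful about are (i) checking that the stated congruence conditions on $p$ and the parity of $k$ genuinely force $-3\in\textrm{SQ}_q$ (invoking Proposition~1(3) over $\mathbb{F}_p$ and the standard fact that a nonsquare in $\mathbb{F}_p$ becomes a square in $\mathbb{F}_{p^k}$ precisely when $k$ is even), and (ii) confirming the size inequality $2n^2+2n+1<\#\mathbb{F}_{q^2}<\frac13(1+2n)(3+2n+2n^2)$ implicitly needed to apply Theorem \ref{2 and 3}, which with $n=\frac{q+1}{2}$ and $\#\mathbb{F}_{q^2}=q^2$ reduces to an elementary polynomial inequality in $q$ valid for all $q\ge 5$. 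After those two routine verifications the theorem follows by concatenating Theorem \ref{subset H+}, Theorem \ref{three quantities 0}, Theorem \ref{three quantities 1}, and Theorem \ref{3H+} exactly as indicated in the paragraph preceding the statement.
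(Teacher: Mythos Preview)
Your proposal is correct and follows essentially the same route as the paper: invoke Theorem \ref{subset H+} (via Theorems \ref{2H+}, \ref{3H+}, and \ref{2 and 3}) to get the expansion indices $2$ and $3$, translate these via Theorems \ref{three quantities 0} and \ref{three quantities 1} into error correction $2$ and covering radius $3$, and use Theorem \ref{3H+} to see that $\mathbf{M}_{+}$ has full rank $2k$, giving the dimension $n-2k$. Your extra checks (i) and (ii) are already absorbed into the proof of Theorem \ref{subset H+}; one tiny slip is that the lower size inequality $2n^2+2n+1<q^2$ is equivalent to $(q-5)(q+1)>0$, so it holds for $q>5$ rather than $q\ge 5$, but under the stated hypotheses on $p$ and $k$ the smallest admissible $q$ is $7$, so nothing is lost.
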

\begin{example}
Let $p=13$, $k=1$ $n=7$ and $\mathbb{F}_{13^2}=\mathbb{F}_{13}[\sqrt{2}]$. Then, the codes over $\mathbb{Z}_{13}^{7}$ defined by the parity-check matrix
\begin{align}
  \left(\begin{array}{ccccccc}
      1 & 9 & 5 & 3 & 10 & 8 & 4\\
      0 & 1 & 5 & 11 & 11 & 5 & 1\\
   \end{array}\right).
\end{align}
results in a $2$-quasi-perfect $13$-ary Lee codes. This codes has $p^{n-2}=371~293$ codewords.
\end{example}

\begin{remark}
If $p$ is an odd prime with $p\equiv -1 \text{ or } +5 \mod 12$ and $k$ is an odd integer, $Cay(\mathbb{F}_{q^2};H_{+})$ only contains $2n^2+1$ vertices at distance $2$
or less from vertex $0$, where $n=\frac{q+1}{2}$. In this case, although $\mathcal{C}(\mathbb{F}_{q^2};H_{+})$ is not a $2$-error correcting code, it is very close to it, since only $2n$
syndromes cannot be corrected.
\end{remark}

\subsection{Quasi-Perfect Lee Codes $\mathcal{C}(\mathbb{F}_{q}^2;H_{-})$}
Let $\{\mathbf{e}_1,\cdots, \mathbf{e}_{2k}\}$ be a basis of $\mathbb{F}_{q}^2$ over $\mathbb{F}_p$. Let $H_{-}=\{\pm \beta_1,\cdots,\pm \beta_{n}\}$, where $n=\frac{q-1}{2}$ and $\beta_j=\sum_{i=1}^{2k} h_{i,j} e_i$. Then, from the definition of $\mathcal{C}(\mathbb{F}_{q}^2;H_{-})$, the parity-check matrix of $\mathcal{C}(\mathbb{F}_{q}^2;H_{-})$ is
\begin{align}
  \label{matrix:H-}
  \mathbf{M}_{-}=\left(\begin{array}{cccc}
      h_{1,1} & h_{1,2} & \dots & h_{1,n}\\
      h_{2,1} & h_{2,2} & & h_{2,n}\\
      \vdots & \vdots &  & \vdots\\
      h_{2k,1} &h_{2k,2} & & h_{2k,n}
   \end{array}\right).
\end{align}
By Theorem \ref{3H-}, the rank of $\mathbf{M}_{-}$ is $2k$. Thus, the dimension of $\mathcal{C}(\mathbb{F}_{q}^2;H_{-})$ is $n-2k$. From Theorem \ref{subset H-}, Theorem \ref{three quantities 0} and Theorem \ref{three quantities 1},  we have the following theorem.
\begin{theorem}\label{graph H-}
Let $p$ be an odd prime with $p\equiv -1 \text{ or } 5 \mod 12$,  $k$ be an odd integer, and $q=p^k>12$. Let $H_{-}$ be defined as (\ref{H-}). Then,the linear code $\mathcal{C}(\mathbb{F}_{q}^2;H_{-})$  is a $2$-quasi-perfect Lee code over $\mathbb{F}_p^n$ with dimension $\frac{q-1}{2}-2k$, where $n=\frac{q-1}{2}$. Moreover, $\mathbf{M}_{-}$ is the parity-check matrix of $\mathcal{C}(\mathbb{F}_{q}^2;H_{-})$.
\end{theorem}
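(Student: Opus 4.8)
The plan is to run, \emph{mutatis mutandis}, the very same argument just used for $\mathcal{C}(\mathbb{F}_{q^2};H_{+})$, since here too $\Gamma=\mathbb{F}_q^2$ is a finite abelian group of exponent $p$ and $H_{-}=-H_{-}$, $(0,0)\notin H_{-}$, $\#H_{-}=q-1=2n$ with $n=\frac{q-1}{2}$. I want to verify three things: (i) $\mathcal{C}(\mathbb{F}_q^2;H_{-})$ is $2$-error correcting; (ii) it is $3$-covering, so that it is $2$-quasi-perfect; (iii) it is exactly the $\mathbb{F}_p$-kernel of $\mathbf{M}_{-}$ and has $\mathbb{F}_p$-dimension $n-2k$.

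For (i) and (ii) I would use the dictionary of Section~\ref{sec:subset sums}. By Theorem~\ref{three quantities 0} the error correction of $\mathcal{C}(\mathbb{F}_q^2;H_{-})$ equals the expansion critical index of $H_{-}$, and by Theorem~\ref{three quantities 1} its covering radius equals the expansion limit index of $H_{-}$; Theorem~\ref{subset H-} then closes the matter by asserting that, under the present hypotheses ($p\equiv-1$ or $5\bmod 12$, $k$ odd, $q=p^k>12$), these two indices are exactly $2$ and $3$. It is worth recording explicitly which features of the hypotheses are used inside Theorem~\ref{subset H-}: (a) $k$ odd together with the Quadratic Reciprocity consequences recalled in Section~\ref{sec:preliminaries} forces $-3\in\mathrm{NSQ}_q$, which via Theorem~\ref{2H-} gives $H_{-}\cap H_{-}^{(2)}=\emptyset$ and $\#\bigl(\{(0,0)\}\cup H_{-}\cup H_{-}^{(2)}\bigr)=1+(q-1)+\tfrac{(q-1)^2}{2}=\tfrac{q^2+1}{2}=2n^2+2n+1=\#B^n_2$; (b) $q\ge 13$ allows Theorem~\ref{3H-}, which gives $H_{-}^{(3)}\cup\{(0,0)\}=\mathbb{F}_q^2$; (c) the numeric constraint $2n^2+2n+1<\#\Gamma<\tfrac13(1+2n)(3+2n+2n^2)$ required by Theorem~\ref{2 and 3} reduces, with $\#\Gamma=q^2$, to $(q-1)(q-5)>0$, which holds for every admissible $q$. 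Finally I should note that $q=p^k>12$ with $k$ odd and $p\equiv-1,5\bmod 12$ forces $p\ge 5$, so the Lee-sphere formula $\#B^n_2=2n^2+2n+1$ applies.

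For (iii) I fix a basis $\{\mathbf{e}_1,\dots,\mathbf{e}_{2k}\}$ of $\mathbb{F}_q^2$ over $\mathbb{F}_p$, pick one representative $\beta_j$ in each pair $\pm\beta_j\in H_{-}$, write $\beta_j=\sum_{i=1}^{2k}h_{i,j}\mathbf{e}_i$, and form $\mathbf{M}_{-}$ as in (\ref{matrix:H-}). Since $c_1\beta_1+\cdots+c_n\beta_n=0$ in $\mathbb{F}_q^2$ is equivalent to $\mathbf{M}_{-}(c_1,\dots,c_n)^{\mathrm{T}}=\mathbf{0}$ over $\mathbb{F}_p$, the code $\mathcal{C}(\mathbb{F}_q^2;H_{-})$ is exactly the null space of $\mathbf{M}_{-}$ over $\mathbb{F}_p$, so $\mathbf{M}_{-}$ is a parity-check matrix and $\dim_{\mathbb{F}_p}\mathcal{C}(\mathbb{F}_q^2;H_{-})=n-\mathrm{rank}\,\mathbf{M}_{-}$. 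The rank of $\mathbf{M}_{-}$ equals the $\mathbb{F}_p$-dimension of the subspace spanned by $H_{-}$, and Theorem~\ref{3H-} shows that $H_{-}$ already spans all of $\mathbb{F}_q^2$ (indeed $H_{-}^{(3)}\cup\{(0,0)\}=\mathbb{F}_q^2$); hence $\mathrm{rank}\,\mathbf{M}_{-}=\dim_{\mathbb{F}_p}\mathbb{F}_q^2=2k$ and $\dim_{\mathbb{F}_p}\mathcal{C}(\mathbb{F}_q^2;H_{-})=\tfrac{q-1}{2}-2k$.

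In this assembly no step is genuinely difficult; the substance has been front-loaded into Theorems~\ref{2H-}, \ref{3H-} and \ref{subset H-}. If one had to prove the statement without those, the main obstacle would be Theorem~\ref{3H-}: establishing that for every $t\in\mathbb{F}_q$ and every $q\ge 13$ the affine cubic $E_t:(x+y+t)(xy-x-y)+xy=0$ has an $\mathbb{F}_q$-point outside the at most six degenerate points. That needs the absolute irreducibility of the projective closure $\overline{E}_t$ (Lemma~\ref{irreducible}, a factorization argument on the leading form $xy(x+y)$) and then the Hasse--Weil bound $\lvert\#\overline{E}_t(\mathbb{F}_q)-(q+1)\rvert\le 2\sqrt{q}$ together with the bookkeeping of the points at infinity and the points with $xy(x+y+t)=0$. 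A secondary, more routine, obstacle is the exact count $\#H_{-}^{(2)}=1+\tfrac{(q-1)^2}{2}$ and $H_{-}\cap H_{-}^{(2)}=\emptyset$, which rely on Lemmas~\ref{H2SQ}--\ref{num-ab} and on pinning down that $-3$ is a non-square in $\mathbb{F}_q$.
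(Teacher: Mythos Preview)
Your proposal is correct and follows essentially the same route as the paper: invoke Theorem~\ref{subset H-} together with Theorems~\ref{three quantities 0} and~\ref{three quantities 1} to get error correction $2$ and covering radius $3$, and use Theorem~\ref{3H-} to see that $H_{-}$ spans $\mathbb{F}_q^2$ so that $\mathrm{rank}\,\mathbf{M}_{-}=2k$ and the dimension is $n-2k$. Your additional unpacking of the hypotheses (the role of $-3\in\mathrm{NSQ}_q$, the verification of the numeric window $(q-1)(q-5)>0$ in Theorem~\ref{2 and 3}, and the observation that $p\ge 5$) is accurate but goes beyond what the paper records, since those checks are already absorbed into Theorem~\ref{subset H-}.
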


\begin{example}
Let $p=23$, $k=1$ and $n=11$. Then, the codes over $\mathbb{Z}_{23}^{11}$ defined by the parity-check matrix
\begin{align}
  \left(\begin{array}{ccccccccccc}
      1 & 2 & 3 & 4 & 5 & 6 & 7 & 8 & 9 & 10 & 11\\
      1 & 12 & 8 & 6 & 14 & 4 & 10 & 3 & 18 & 7 & 21\\
   \end{array}\right).
\end{align}
results in a $2$-quasi-perfect $23$-ary Lee codes. This codes has $p^{n-2}=23^{9}$ codewords.
\end{example}

\begin{remark}
If $p$ is an odd prime with $p\equiv 1 \text{ or } -5 \mod 12$  and $q=p^k>12$, or, if $p$ is an odd prime with $p\equiv -1 \text{ or } 5 \mod 12$
, $k$ is an even integer and $q=p^k>12$
,$Cay(\mathbb{F}_{q}^2;H_{-})$ only contains $2n^2+1$ vertices at distance $2$
or less from vertex $0$, where $n=\frac{q-1}{2}$. In this case, although $\mathcal{C}(\mathbb{F}_{q}^2;H_{-})$ is not a $2$-error correcting code, it is very close to it, since only $2n$
syndromes cannot be corrected.
\end{remark}

Now, Let us give some considerations on the quality of the constructed $2$-quasi-perfect Lee codes. Note, that, since the Lee sphere of radius $2$ contains $\#B^n_2=2n^2+2n+1$ words, the graph induced by any $2$-quasi-perfect linear code has at least $2n^2+2n+1$ vertices. The graphs $Cay(\mathbb{F}_{q^2};H_{+})$ and $Cay(\mathbb{F}_{q}^2;H_{-})$ constructed in this paper have $q^2$ vertices. Therefore, for the case $Cay(\mathbb{F}_{q^2};H_{+})$,  the number of vertices is $q^2=4n^2-4n+1=2 \# B^n_2 -8n+1$, where $n=\frac{q+1}{2}$. Also, for the case
$Cay(\mathbb{F}_{q}^2;H_{-})$, the number of vertices is $q^2=4n^2+4n+1=2 \# B^n_2 -1$, where $n=\frac{q-1}{2}$. Thus, the reached vertices are asymptotically the double of those that would be reached in the graph associated to a perfect code. In other words, when $q=p$, the density of the codes presented is $\frac{1}{p^2}$.

\section*{Concluding Remarks}
In this paper, we firstly  survey the relationships between subset sums, Cayley graphs, and Lee linear codes and present some results. Next, we suggest
an original approach to construct two classes of $2$-quasi-perfect Lee codes defined over the space $\mathbb{Z}_p^n$ for $n=\frac{p^k+1}{2}$ $(\text{with} ~p\equiv 1, -5 \mod 12 \text{ and } k \text{ is any integer}, \text{ or } p\equiv -1, 5 \mod 12 \text{ and } k \text{ is an even integer})$ and $n=\frac{p^k-1}{2}$ $(\text{with }p\equiv -1, 5 \mod 12,  k \text{ is an odd integer}\text{ and } p^k>12)$, where $p$ is an  odd prime. To this end,  subsets from some quadratic curves over finite fields have been considered. The obtained  classes of codes contain the quasi-perfect Lee codes constructed by Camarero and Mart\'{\i}nez in \cite{CM16}. We therefore generalize some results presented in \cite{CM16} and solve a conjecture proposed in \cite{CM16} by proving that the related Cayley graphs are Ramanujan or almost Ramanujan. Our approach can be used for constructing other classes of quasi-perfect Lee codes which have many practical applications.

\ifCLASSOPTIONcaptionsoff
  \newpage
\fi

\end{document}